\newcommand{\poly}{{\normalfont \texttt{poly}}}
\newcommand{\one}{\ensuremath{\mathbf{1}}} 
\DeclareMathOperator{\Diag}{Diag}
\newcommand{\norm}[1]{\lVert#1\rVert}
\newcounter{note}[section]
\newcommand{\yuetodo}[1]{{\large\color{green}[Yue todo: #1]}}
\definecolor{yue}{rgb}{0.7, 0, 0}
\renewcommand{\yuetodo}[1]{}
\newcommand{\mcal}[1]{\ensuremath{\mathcal {#1}}}
\definecolor{darkgreen}{rgb}{0,0.5,0}
\definecolor{lightblue}{RGB}{0,176,240}
\definecolor{darkblue}{RGB}{0,112,192}
\definecolor{lightpurple}{RGB}{124, 66, 168}
\definecolor{grey}{RGB}{139, 137, 137}
\definecolor{maroon}{RGB}{178, 34, 34}
\definecolor{green}{RGB}{34, 139, 34}
\definecolor{types}{RGB}{72, 61, 139}
\definecolor{gold}{rgb}{0.8, 0.33, 0.0}
\definecolor{darkgray}{gray}{0.3}
\definecolor{darkred}{rgb}{0.5, 0, 0}
\definecolor{darkgreen}{rgb}{0, 0.5, 0}
\definecolor{darkblue}{rgb}{0,0,0.5}
\newcommand\markx[2]{}
\newcommand{\Z}{\mathbb{Z}}
\newcommand{\R}{\mathbb{R}}
\newcommand{\ignore}[1]{}
\newcounter{task}
\newenvironment{proofof}[1]{\vspace{10pt}
\noindent \textbf{Proof of #1:}}{\hfill \qed}
\newtheorem{theorem}{Theorem}[section]
\newtheorem{corollary}[theorem]{Corollary}
\newtheorem{fact}[theorem]{Fact}
\newtheorem{lemma}[theorem]{Lemma}
\theoremstyle{definition}
\newtheorem{definition}[theorem]{Definition}
\newcounter{cnt:challenge}
\begin{document}

\title{On the Hardness of Opinion Dynamics Optimization with $L_1$-Budget on Varying Susceptibility to Persuasion}


\author{T-H. Hubert Chan\thanks{The University of Hong Kong. \texttt{hubert@cs.hku.hk}} \and
Chui Shan Lee\thanks{The University of Hong Kong. \texttt{leechuishan@connect.hku.hk}}}

\date{}

\begin{titlepage}

\maketitle

\begin{abstract}
Recently, Abebe et al. (KDD 2018) and Chan et al. (WWW 2019) have considered
an opinion dynamics optimization problem that is based on
a popular model for social opinion dynamics, in which each agent has some fixed innate opinion, and a resistance that measures the importance it places on its innate opinion; moreover, the agents influence one another's opinions through an iterative process. Under certain conditions, this iterative process converges to some equilibrium opinion vector.  
Previous works gave an efficient local search algorithm
to solve the unbudgeted variant of the problem, for which the goal is to modify the resistance of any number of agents (within some given range) such that the sum of the equilibrium opinions is minimized.
On the other hand, it was proved that the $L_0$-budgeted variant is NP-hard,
where the $L_0$-budget is a restriction given upfront on the number of agents
whose resistance may be modified.

Inspired by practical situations in which the effort to modify an agent's
resistance increases with the magnitude of the change,
we propose the $L_1$-budgeted variant, in which
the $L_1$-budget is a restriction on the sum of the magnitudes
of the changes over all agents' resistance parameters.
In this work, we show that the $L_1$-budgeted variant is NP-hard via a reduction
from vertex cover.  However, contrary to the $L_0$-budgeted variant,
a very technical argument is needed to show that the optimal solution
can be achieved by focusing the given $L_1$-budget on as small a number of agents as possible,
as opposed to spreading the budget over a large number of agents.

\end{abstract}

\thispagestyle{empty}
\end{titlepage}

\section{Introduction}
\label{sec:intro}

The process of social influence is a significant basis for opinion formation, decision making and the shaping of an individual's identity. It drives many social phenomenon ranging from the emergence of trends, diffusion of rumor, and the shaping of public views about social issues.
An opinion formation model was introduced by the works of DeGroot~\cite{DeGroot1974} and Friedkin and Johnsen~\cite{Friedkin1999}, which considered how agents' influence one another's opinions in discrete time steps.
In this model, each agent~$i$ has some {\em innate opinion} $s_i$ 
in $[0,1]$, which
reflects the intrinsic position of the agent on a certain topic.
The expressed opinion of an agent is updated
in each iteration according to the weighted average
of other agents' opinions (according to the \emph{interaction matrix}) and its innate opinion.
The weight that an agent assigns to its own innate opinion
is captured by a {\em resistance parameter} $\alpha_i \in [0,1]$,
where a higher value for the
resistance parameter means that the agent is less
susceptible to persuasion by
the opinions of other agents. 
Under very mild conditions,
the expressed opinions of the agents converge
to an equilibrium, which is a function of the innate opinions, the interaction
matrix between the agents and the agents' resistance parameters.

Recent works by Abebe et al.~\cite{AbebeKPT18}
and Chan et al.~\cite{DBLP:conf/www/ChanLS19}
have considered the opinion dynamics optimization problem,
in which the innate opinions and the interaction matrix between
agents are given as the input, and the goal is to minimize the average equilibrium opinion
by varying the agents' resistance parameter.
As mentioned in their works,
the motivation of the problem has been inspired by
empirical works in development and social psychology that studied people's susceptibility to persuasion,
and more references to related work and applications
are given in~\cite{AbebeKPT18,DBLP:conf/www/ChanLS19}.

Restrictions on how the agents' resistance
parameters may be modified lead to different variants
of the problem with different hardness.
At the trivial end of the spectrum,
if one can choose any $\alpha_i \in [0,1]$ for every agent~$i$,
then the trivial solution to minimize
the average equilibrium is to set $\alpha_i = 1$
for the agent with the minimum innate opinion
and set the resistance of all other agents to 0,
provided that the interaction matrix among
the agents is \emph{irreducible},
in the sense that every agent has some direct or indirect influence over
every other agent.
If the resistance of each agent~$i$
must be chosen from some restricted interval $[l_i, u_i] \subseteq [0,1]$,
then an efficient local search method is given
in~\cite{DBLP:conf/www/ChanLS19} such that
the minimum average equilibrium can be achieved
by setting each agent's resistance parameter to either its lower~$l_i$ or
upper~$u_i$ bound.
In addition to the restriction intervals, the problem gets harder if one places further restrictions
on the number of agents whose resistance parameters may be modified.
The $L_0$-budgeted variant has some initial resistance vector~$\widehat{\alpha}$
for all agents and some budget~$k$,
and the algorithm is allowed to change the resistance parameters of at most~$k$ agents.
Indeed, it is shown in~\cite{AbebeKPT18}
that the $L_0$-budgeted variant is NP-hard via
a reduction from the vertex cover problem.

Intuitively, in the reduction construction
for proving the NP-hardness of the $L_0$-budgeted variant,
the set of agents whose resistance parameters are modified
corresponds to a set of vertices to be considered
as a candidate as a vertex cover in some graph.
Hence, it seems that the binary nature of the choice 
for each agent contributes to the hardness of the problem.
A natural question is whether the problem becomes easier
if one is allowed to make ``fractional'' decision for each agent.
From a practical point of view, the $L_0$-budget uses the implicit
assumption that
modifying the resistance parameter of an agent a little
takes the same effort as modifying it by a lot.  However,
it is reasonable to assume that
the effort it takes to modify an agent's resistance
should be proportional to the magnitude of the change.

With such motivations in mind,
we propose the $L_1$-budgeted variant in this work.
Similar to the $L_0$-budgeted variant in which
an initial resistance vector $\widehat{\alpha}$
and a budget~$k$ is given,
the goal is to minimize the average equilibrium opinion
by choosing a vector~$\alpha$ (satisfying any restriction interval placed
on each agent)
such that $\norm{\alpha - \widehat{\alpha}}_1 \leq k$.

\subsection{Our Contributions}

At first sight, the efficient local search techniques in~\cite{DBLP:conf/www/ChanLS19}
and the fractional nature of the $L_1$-budget
suggest that the problem might be solved optimally by some gradient method or mathematical program.
Indeed, there are examples in which the optimal solution is achieved
by assigning the budget to modify the agents' resistance partially, i.e., the resistance
parameter an agent does not reach its specified lower or upper bound.
However, it turns out that this variant is also NP-hard.

\begin{theorem}
The $L_1$-budgeted variant of the opinion dynamics optimization
problem is NP-hard.
\end{theorem}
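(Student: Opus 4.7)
The plan is to reduce from vertex cover: given a graph $G = (V, E)$ and integer $k$, I would construct an opinion dynamics instance whose optimal $L_1$-budgeted objective encodes whether $G$ has a vertex cover of size at most $k$. The natural construction is to introduce one ``vertex agent'' per vertex of~$G$ with initial resistance $\widehat{\alpha}_v = 1$ and restriction interval $[0,1]$, together with auxiliary ``edge agents'' that couple to the two endpoints of each edge through the interaction matrix. The innate opinions are chosen so that lowering $\widehat{\alpha}_v$ all the way to~$0$ ``activates'' $v$ and pulls down the equilibrium of every incident edge agent, while leaving $\widehat{\alpha}_v$ at~$1$ leaves those edges ``uncovered.'' The $L_1$-budget is set to~$k$ to match the target cover size.

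Completeness is routine: if $C \subseteq V$ is a vertex cover of size at most~$k$, setting $\alpha_v = 0$ for $v \in C$ uses $L_1$-budget at most~$k$ and achieves a small equilibrium sum because every edge agent is incident to an activated vertex. The challenging direction is soundness: I must show that any feasible $\alpha$ with $\norm{\alpha - \widehat{\alpha}}_1 \leq k$ beating the threshold assigns the budget in integer unit chunks to a subset of vertex agents forming a vertex cover. As the abstract warns, this is delicate because the $L_1$ relaxation in principle allows spreading the budget as ``fractional coverage'' over many agents, and a priori such fractional allocations could beat any integer one.

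The heart of the argument, and the step I expect to be the main obstacle, is a \emph{concentration lemma} stating that in the constructed instance the minimum of $\mathbf{1}^\top z$ subject to $\norm{\alpha - \widehat{\alpha}}_1 \leq k$ (where $z = (I - (I - \Diag(\alpha))W)^{-1}\Diag(\alpha) s$ is the Friedkin--Johnsen equilibrium) is attained at a vector in which all but at most one vertex agent has $\alpha_v \in \{0,1\}$. I would prove this by a pairwise exchange argument: for any two vertex agents $u, v$ with $\alpha_u, \alpha_v$ both strictly interior, consider the one-parameter family $\alpha_u(t) = \alpha_u + t$, $\alpha_v(t) = \alpha_v - t$, which preserves the $L_1$-budget, and show that the restriction of the objective to this line is minimized at an endpoint of the feasible interval. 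Iterating the exchange reduces the number of fractional coordinates and eventually yields an integer solution of no worse objective, after which a straightforward covering argument shows that the support of this integer solution must be a vertex cover, otherwise some edge agent's equilibrium remains above the threshold.

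The technical obstacle in the exchange step is that $\alpha \mapsto \mathbf{1}^\top z$ is neither globally convex nor globally concave as a function of the resistance vector, so standard convex-analytic shortcuts are unavailable; the argument must exploit the specific structure of the gadget. My plan is to differentiate the resolvent $(I - (I - \Diag(\alpha))W)^{-1}$ twice with respect to $t$ using the identity $\frac{d}{dt} M^{-1} = -M^{-1} (\frac{d M}{dt}) M^{-1}$ and carefully track the signs of the resulting entries, exploiting the sparse bipartite structure of the gadget (in particular that vertex agents do not directly influence each other, only through edge agents of degree two). I expect the calculation to collapse to verifying that a certain $2\times 2$ combination of second-order mixed partials has a definite sign on the feasible interval, which the combinatorial structure of the construction---together with appropriate numerical choices of innate opinions and interaction weights---is designed to force.
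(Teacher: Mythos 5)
Your high-level plan — reduce from vertex cover, then establish a ``concentration'' lemma via a pairwise exchange argument that moves along the line $\alpha_u + t, \alpha_v - t$ and shows the objective is minimized at an endpoint by second-derivative sign analysis — is exactly the skeleton of the paper's proof (its Lemma~\ref{lemma:structure} and Lemma~\ref{dirderi}). However, the gadget you propose is precisely the kind that the paper identifies as \emph{not} working, and this is not a minor implementation detail but the crux of the problem.

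You build a sparse bipartite gadget: vertex agents coupled only indirectly through degree-two edge agents. You then hope to ``exploit the sparse bipartite structure'' to force a definite sign on the restricted second derivative. The paper reports discovering the opposite: when two agents are not well-connected through the interaction matrix — which is exactly the situation in your gadget for two vertex agents $u,v$ that are not adjacent in $G$ and hence share no edge agent — splitting an $L_1$-budget $b<2$ fractionally between them \emph{beats} concentrating it on one of them. In other words, for a sparse interaction graph the exchange argument pushes toward the interior, not the boundary, and the concentration lemma is false. The effect you need (budget focusing) holds when the interaction matrix is dense, e.g.\ a clique, but a clique throws away the structure of $G$, so the paper's resolution is to take the interaction matrix $P^{(\delta)} = (1-\delta)C + \delta R$ with $C$ a clique on $N=V\cup\{0\}$ (a single auxiliary agent with $s_0=1$, no edge agents), $R$ the normalized adjacency of $G$, and $\delta$ polynomially tiny (namely $\delta = d^3(2d-1)^{3n-3}/((n+1)^6(2d+1)^{3n})$). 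The clique component drives the Hessian sign (computed exactly via Sherman--Morrison in Lemma~\ref{lemma:claimC}), and a quantitative perturbation bound (Fact~\ref{inverse-bound} from Alfa et al.\ plus Lemmas~\ref{sum of dyijdPkl} and~\ref{sum of entries of M}) shows the sign survives the $\delta R$ perturbation; the $\delta R$ component is what makes an uncovered edge detectable in the objective (Lemma~\ref{lemma:no2}). Without a clique-like dense component, your gadget has no reason for the Hessian sign to come out the right way, and the paper's own experiments indicate it will come out the wrong way.

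Two smaller points. First, your concentration lemma stops at ``all but at most one vertex agent is at $\{0,1\}$,'' which still leaves a fractional coordinate; the paper proves the stronger statement that an exactly-$k$-element integral $T$ is optimal, which is what the threshold separation actually uses. Second, ``iterating the exchange'' to reduce fractional coordinates needs care, because the one-dimensional restriction of $f$ along $e_i - e_j$ is not convex; the paper handles this by proving that at any interior critical point of the restriction the second derivative is strictly negative (so interior points cannot be local minima), which is a cleaner formulation than a generic exchange-iterate-to-termination argument and is what you should aim for once the gadget is fixed.
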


\noindent \emph{Hardness Intuition.}
Given an $L_1$-budget,
whether one should spread the budget among many agents or focus it
on a small number of agents depends on the interaction matrix among the agents.
When we tried to understand the structure of the problem by studying various examples,
we discovered that if two agents have little direct or indirect influence on each other,
then the $L_1$-budget should be shared among them.  However,
if all agents are well-connected such as the case of a clique,
then the budget should be as focused as possible on a small number of agents.
Hence, intuitively, the problem should be hard if the underlying interaction matrix
among the agents resembles a well-connected graph.

The difficulty here is that we do not yet know how to quantify the well-connectedness
of the interaction matrix in relation to this \emph{budget-focus} effect.
Furthermore, the hardness still relies on the reduction from the vertex cover problem,
whose hardness has not been extensively studied for graphs with different connectivity.
Our solution to the reduction construction is to consider an interaction matrix
that is a convex combination of a clique and some given graph $G$ that is supposed
to be an instance of the vertex cover problem.

The high level argument is that as long as the weight of the clique is large enough,
the aforementioned budget-focus effect should be in place.  However,
as long as there is a non-zero weight of the graph $G$ on the interaction matrix,
the existence of a vertex cover for $G$ of a certain size will have
a quantifiable effect on the optimal average equilibrium opinion given a certain $L_1$-budget.
Even though the general approach is not too complicated, 
combining these ideas require quite technical calculations.

\noindent \emph{Outline.} In Section~\ref{sec:prelim}, we will introduce the notation and
formally recall various variants of the problem.  In Section~\ref{sec:hardness},
we will give our reduction construction and explain the intuition behind the proofs.
Finally, the most technical details are deferred to Section~\ref{sec:proofs}.

\section{Preliminaries}
	\label{sec:prelim}
	
We recall the problem setting as described in~\cite{AbebeKPT18,DBLP:conf/www/ChanLS19}.
Consider a set $N$ of \emph{agents}, where each agent $i \in N$ is associated with an \emph{innate opinion} $s_i \in [0,1]$ (where higher values correspond to more favorable opinions towards a given topic) and a parameter measuring an agent's susceptibility to persuasion $\alpha_i \in [0,1]$ (where higher values signify agents who are less susceptible to changing their opinions). We call $\alpha_i$ the \emph{resistance parameter}. 

The agents interact with one another in discrete time steps.
%
%
The interaction matrix captures the relationship between agents and is simply
a row stochastic matrix\footnote{
Given sets $U$ and $W$,
we use the notation $U^W$ to denote the collection
of all functions from $W$ to $U$.
Each such function can also be interpreted as a vector (or a matrix if $W$ itself is a
Cartesian product),
where each coordinate is labeled by an element in $W$
and takes a value in $U$.
As an example, a member of $[0,1]^{N \times N}$
is a matrix whose rows and columns
are labeled by elements of $N$.
The alternative notation $[0,1]^{n \times n}$
implicitly assumes a linear ordering on $N$,
which does not have any importance in our case and
would simply be an artefact of the notation.}
 $P \in [0,1]^{N \times N}$
(i.e., each entry of $P$ is non-negative
and every row sums to 1, but $P$ needs not be symmetric).
\ignore{
Equivalently, by defining $A = \Diag(\alpha)$ to be the diagonal  matrix with $A_{ii} = \alpha_i$, $I$ is the identity matrix, and $P \in [0,1]^{V \times V}$ to be the row stochastic matrix (i.e., each entry of $P$ is non-negative
and every row sums to 1) that captures agents interactions, 
}
We denote $A = \Diag(\alpha)$ as the diagonal  matrix with $A_{ii} = \alpha_i$, and $I$ as the identity matrix.
Starting from some arbitrary initial expressed opinion vector $z^{(0)} \in [0,1]^N$,
the expressed opinion vector is updated in each time step according to
the following equation:

\begin{equation}
\label{eq:dynamics2}
z^{(t+1)} := A s + (I - A) P z^{(t)}.
\end{equation}

Equating $z^{(t)}$ with $z^{(t+1)}$, one can see that the equilibrium opinion vector is given by $z = [I - (I-A)P]^{-1} As$, which exists under very mild conditions such as the following.

\begin{fact}[Convergence Assumption]
\label{fact:assume}
 Suppose $P$ is irreducible
and at least one $i \in N$ has $\alpha_i > 0$.
Then, equation~(\ref{eq:dynamics2}) converges to a unique
equilibrium $\lim_{t\rightarrow \infty}z^{(t)}$.
\end{fact}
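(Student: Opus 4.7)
The plan is to rewrite the iteration in affine form $z^{(t+1)} = c + M z^{(t)}$ with $c := As$ and $M := (I-A)P$, and then show that $M$ is a contraction in the sense that its spectral radius $\rho(M) < 1$. Once that is in hand, unrolling the recursion gives $z^{(t)} = M^t z^{(0)} + \bigl(\sum_{k=0}^{t-1} M^k\bigr) c$, so $M^t \to 0$ and the geometric series converges to $(I-M)^{-1}$; hence $z^{(t)}$ converges to the unique fixed point $z = [I - (I-A)P]^{-1} As$ irrespective of the starting vector $z^{(0)}$.

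The main step, which I expect to be the principal obstacle, is establishing $\rho(M) < 1$. My plan is to bound the $\infty$-norm of a power of $M$ rather than $M$ itself. Because $P$ is row-stochastic and $A$ is a nonnegative diagonal matrix, $M$ is nonnegative and the $i$-th row sum of $M$ equals $(1-\alpha_i) \sum_j P_{ij} = 1-\alpha_i \le 1$. Writing $r^{(n)}_i := (M^n \mathbf{1})_i$ for the $i$-th row sum of $M^n$, an easy induction yields the probabilistic identity $r^{(n)}_i = \mathbb{E}_i\bigl[\prod_{k=0}^{n-1}(1-\alpha_{X_k})\bigr]$, where $(X_k)_{k\ge 0}$ is the Markov chain on $N$ with transition matrix $P$ started at $X_0 = i$; so $r^{(n)}_i \in [0,1]$, and the strict inequality $r^{(n)}_i < 1$ uniformly in $i$ would suffice.

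To force strict inequality I would use irreducibility together with the existence of some $i^* \in N$ with $\alpha_{i^*} > 0$. Irreducibility of $P$ on the finite set $N$ guarantees that for every $i$ there is an integer $n_i$ with $(P^{n_i})_{i,i^*} > 0$; taking $n := \max_i n_i$ yields a single length such that, from any starting state $i$, the chain visits $i^*$ within $n$ steps with positive probability $p_i > 0$. On that event the product $\prod_{k=0}^{n-1}(1-\alpha_{X_k})$ contains the factor $(1-\alpha_{i^*}) < 1$ and all remaining factors lie in $[0,1]$, so the product is bounded above by $(1-\alpha_{i^*})$; bounding it by $1$ on the complementary event then gives $r^{(n)}_i \le 1 - p_i \alpha_{i^*} < 1$. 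Hence $\|M^n\|_\infty < 1$, so $\rho(M) \le \|M^n\|_\infty^{1/n} < 1$, closing the plan.
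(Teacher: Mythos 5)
The paper does not actually prove this statement; it is presented as a ``Fact'' imported from the cited opinion-dynamics literature, so there is no in-paper proof to compare against. Judged on its own merits, your argument is the standard route and is essentially sound: reduce to showing $\rho\bigl((I-A)P\bigr)<1$, after which the affine iteration $z^{(t+1)} = As + (I-A)Pz^{(t)}$ is eventually contractive and converges to the unique fixed point $[I-(I-A)P]^{-1}As$ from any $z^{(0)}$. The probabilistic identity $r^{(n)}_i = \mathbb{E}_i\bigl[\prod_{k=0}^{n-1}(1-\alpha_{X_k})\bigr]$ is a correct induction using the Markov property, and combining irreducibility with $\alpha_{i^*}>0$ is the right lever for a uniform strict bound $\Vert M^n\Vert_\infty<1$.

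There is one small off-by-one you should tighten. The product runs over $k\in\{0,\dots,n-1\}$, so you need the chain started at $i$ to hit $i^*$ at some time in $\{0,\dots,n-1\}$, not merely ``within $n$ steps.'' With $n:=\max_i n_i$ you leave open the case where, for some $i$, the only guaranteed hitting time provided by irreducibility is $n_i=n$, which lies just outside the product's index range (and for a periodic chain there may be no earlier visit). Taking $n := 1 + \max_i n_i$ (or any $n > \max_i n_i$) fixes this: then every $n_i\le n-1$, so $\Pr_i[X_{n_i}=i^*] = (P^{n_i})_{i,i^*}>0$ puts positive mass on the event that the factor $(1-\alpha_{i^*})$ appears in $\prod_{k=0}^{n-1}(1-\alpha_{X_k})$, giving $r^{(n)}_i \le 1 - \alpha_{i^*}\,\Pr_i[\text{hit }i^*\text{ by time }n-1] < 1$ uniformly in $i$, hence $\Vert M^n\Vert_\infty<1$ and $\rho(M)\le \Vert M^n\Vert_\infty^{1/n}<1$. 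With that correction the argument goes through; it is a nice elementary alternative to the Perron--Frobenius argument one usually sees cited for this convergence fact.
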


The opinion susceptibility problem is defined below.  Intuitively,
the objective is to choose a resistance vector $\alpha$ to minimize the sum of equilibrium opinions $\langle \one, z \rangle = \one^\top z$,
i.e., the goal is to drive the average opinion towards~0.
Observe that one can also consider maximizing the sum of equilibrium opinions.
To see that the minimization and maximization problems are equivalent,
consider the transformation $x \mapsto 1 - x$ on the opinion space $[0,1]$
that is applied to the innate opinions and expressed opinions in every time step.
In this paper, we will focus on the minimization problem as follows.

\ignore{
Then, it follows that equation~(\ref{eq:dynamics2})
still holds after the transformation.
Hence, the minimization problem in the original opinion space
is equivalent to the maximization problem in the transformed space.
Either one is an optimization problem in which the goal 
is to drive the average opinion to one of the polar opposites $\{0,1\}$.
Indeed, to get readers to be familiar with both interpretations,
the hardness is proved via the maximization variant
in Section~\ref{sec:hardness},
while the algorithms are given via
the minimization variant in Sections~\ref{sec:efficient}
and~\ref{sec:heuristic}.
}

\begin{definition}[Opinion Susceptibility Problem (Unbudgeted Variant)]
\label{defn:osp}
Given a set $N$ of agents
with innate opinions $s \in [0,1]^N$
and interaction matrix $P \in [0,1]^{N \times N}$,
suppose for each $i \in N$, its resistance is restricted
to some interval $\mcal{I}_i := [l_i, u_i] \subseteq [0,1]$
where we assume\footnote{In view of Fact~\ref{fact:assume},
we assume that for at least one $i$, $l_i > 0$.} that $0 \leq l_i \leq u_i \leq 1$.

The objective is to choose $\alpha \in \mcal{I}_N := \times_{i \in N} \mcal{I}_i \subseteq [0,1]^N$
such that the following objective function is minimized:
$$f(\alpha) := \one^\top [I-(I-A)P]^{-1}As,$$
where $A=\Diag(\alpha)$ is the diagonal matrix with
$A_{ii}=\alpha_i$.  Observe that the assumption in
Fact~\ref{fact:assume} ensures that the above inverse exists.
\end{definition}

\noindent \emph{Budgeted Variants.} 
To describe different types of budgets,
we use the following norms.
Given $x \in \R^N$,
we denote its $L_0$-norm $\norm{x}_0 := |\{i \in N: x_i \neq 0\}|$
and its $L_1$-norm $\norm{x}_1 := \sum_{i \in N} |x_i|$.
For $b \in \{0,1\}$, the $L_b$-budgeted variant
of the problem also has 
some initial resistance vector $\widehat{\alpha} \in \mcal{I}_N$
and a given budget $k > 0$.
The goal is to find $\alpha \in \mcal{I}_N$
to minimize $f(\alpha)$ subject to $\norm{\alpha - \widehat{\alpha}}_b \leq k$.

\noindent \emph{Hardness of the Various Variants.}
A polynomial-time algorithm is given for the unbudgeted variant in~\cite{DBLP:conf/www/ChanLS19},
while the $L_0$-budgeted variant is shown to be NP-hard in~\cite{AbebeKPT18}
via a reduction from vertex cover problem.
The main result of this work is to
show that the $L_1$-budgeted variant is also NP-hard.

\section{Hardness of $L_1$-Budgeted Variant}
	\label{sec:hardness}

	As in~\cite{AbebeKPT18}, we shall prove the NP-hardness
	of the $L_1$-budgeted variant via reduction from the
	vertex cover problem on regular graphs~\cite{Feige03},
	where a graph is $d$-regular if every vertex has degree~$d$.
	
	\begin{fact}[Vertex Cover on Regular Graphs]
	Given a $d$-regular undirected graph $G = (V,E)$ and some $k > 0$,
	it is NP-hard (even for $d=3$) to decide if $G$ has a vertex cover $T$ of size $k$,
	where $T  \subseteq V$ is a vertex cover for $G$ if every edge in $E$ has
	at least one end-point in $T$.
	\end{fact}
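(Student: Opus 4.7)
The plan is to reduce from the unrestricted Vertex Cover problem (known to be NP-hard by Karp's reduction from 3-SAT) to Vertex Cover on $3$-regular graphs via two gadget-based stages: first cap the maximum degree at $3$, then raise every low-degree vertex to degree exactly $3$. Each stage preserves the optimum cover size up to an instance-independent additive offset, so the decision version of the original problem translates into a decision about the final cubic instance.

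For the first stage, I would replace every vertex $v$ of degree $d_v > 3$ in the input graph by an even-cycle gadget whose $d_v$ "outer" vertices each absorb exactly one of $v$'s original incident edges, and whose $d_v$ "inner" vertices lie entirely inside the gadget. The minimum vertex covers of an even cycle are precisely its two alternating independent-set complements, which lets me argue that in any minimum cover of the transformed graph either all outer proxies of $v$ are present (corresponding to $v$ being in the original cover) or none are (corresponding to $v$ not being in the cover), with the fixed cycle cost absorbed as an offset. For the second stage, given a graph $G'$ of maximum degree $3$, I would attach a small degree-raising gadget $H$ at each vertex $v$ of deficient degree, once per missing edge. A concrete choice of $H$ is the $5$-vertex graph obtained from $K_4$ by subdividing one edge once; it has a distinguished degree-$2$ vertex $r$ (the subdivision point) with all other vertices of degree $3$. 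A short case check shows that every minimum vertex cover of $H$ has size exactly $3$ and is realizable both with $r$ and without $r$ in the cover. Consequently, attaching $H$ to $v$ via a fresh edge $\{r, v\}$ raises $v$'s degree by one, brings $r$ to degree $3$, and contributes an additive $3$ to the optimum cover regardless of whether $v$ is in the cover. Iterating this $3 - d_v$ times for each low-degree $v$ produces a $3$-regular graph whose cover optimum is shifted by a fixed amount $\Delta = 3(3|V(G')| - 2|E(G')|)$.

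The main obstacle is the combinatorial verification that any optimum cover of the composite graph respects the per-gadget structure, so that the overall optimum decomposes cleanly as the original optimum plus the predicted additive shift. The forward direction is straightforward: given a cover of the input graph, choose for each gadget the minimum cover (with or without $r$) that is consistent with whether $v$ is covered. The backward direction requires a local exchange argument showing that any cover of the composite graph can be massaged into a gadget-respecting form without increasing its size; the key point is that each gadget's minimum cover is $3$ and no global "shortcut" across gadget boundaries can beat this, because the only edge linking the gadget to the rest of the graph is $\{r,v\}$ and it is already covered by the chosen $H$-cover when $r \in T$. Once these accounting lemmas are in place, $G$ has a vertex cover of size at most $k$ if and only if the constructed cubic graph has one of size at most $k + \Delta'$, where $\Delta'$ combines the offsets from both stages, and the whole construction is polynomial in the size of $G$; this establishes NP-hardness of Vertex Cover on $3$-regular graphs.
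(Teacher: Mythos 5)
The paper does not prove this Fact --- it is quoted from the literature (the citation is to Feige's work on vertex cover on regular graphs) --- so there is no in-paper proof to compare against. Your proposal is a self-contained attempt, and its overall two-stage strategy (first cap degrees at $3$, then raise every deficient degree to exactly $3$) is the standard route; your stage-two gadget (a $K_4$ with one edge subdivided, attached via its subdivision vertex $r$) is correctly analysed: every minimum vertex cover of it has size $3$, realizable both with and without $r$, so each attachment contributes an exact additive $3$ to the optimum regardless of whether the host vertex is covered.

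However, stage one as written has a genuine gap. If you replace a degree-$d_v$ vertex $v$ by the even cycle $C_{2d_v}$, the two canonical minimum covers of the cycle --- all $d_v$ outer vertices, or all $d_v$ inner vertices --- have the \emph{same} size $d_v$. But ``all outer'' is meant to encode ``$v$ is in the cover'' (cost $1$ in $G$) while ``all inner'' encodes ``$v$ is not in the cover'' (cost $0$ in $G$); the per-vertex offset is therefore $d_v-1$ in the first case and $d_v$ in the second, not a fixed amount. Concretely, if $T'$ is a cover of the transformed graph normalised so every gadget is all-outer or all-inner, and $T$ is the induced cover of $G$, one obtains $|T'| = |T| + \sum_{v \in H} d_v - |T \cap H|$, where $H$ is the set of replaced high-degree vertices; the term $|T \cap H|$ varies across covers, so there is no single threshold $k + \Delta'$ on the transformed graph that decides whether $G$ has a cover of size at most $k$. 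The standard fix is to use a gadget whose ``not-in-cover'' minimum is one smaller than its ``in-cover'' minimum: replace $v$ by a \emph{path} on $2d_v - 1$ vertices alternating $d_v$ outer vertices with $d_v - 1$ inner vertices. Then the all-outer cover has size $d_v = 1 + (d_v - 1)$ and the all-inner cover has size $d_v - 1 = 0 + (d_v - 1)$, giving a uniform per-vertex offset of $d_v - 1$. With that correction (and a check that the degree-$2$ vertices created inside the path gadgets are also boosted to degree $3$ in stage two), the argument goes through as you outlined.
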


\subsection{Warmup: Reduction for $L_0$-Budget}
\label{sec:L_0}

Before we give our final reduction construction for $L_1$-budget,
we give a simplified reduction for $L_0$-budget, which will
offer some intuition on why the $L_1$-budget reduction is more complicated.  
The reduction here for $L_0$-budget
is similar to the one given in~\cite{AbebeKPT18},
but is even simpler because we allow different agents~$i$ to have different
ranges $[l_i, u_i]$ for their resistance parameters.

Recall that
an instance of the vertex cover problem
consists of a $d$-regular graph $G=(V,E)$
with $n = |V|$
and some target vertex cover size $k$.

\noindent \emph{Reduction Construction.} In addition to the original
vertices in $V$, we create one extra agent~0 to form $N := V \cup \{0\}$.
For the innate opinions, $s_0 = 1$ and $s_i = 0$ for $i \neq 0$;
for the initial resistances,
$\widehat{\alpha}_0 = 1$ and $\widehat{\alpha}_i = 0$ for $i \neq 0$.
For the range of resistance parameter,
we restrict $\mcal{I}_0 = \{1\}$
and $\mcal{I}_i = [0,1]$ for $i \neq 0$;
in other words, agent~$0$ will always remain the most stubborn,
while agents in $V$ have 0 resistance initially,
but their resistance parameters could be increased to 1
subject to the budget constraint.
The $L_0$-budget is $k$, which is the same as the target cover size.
Our final construction for the $L_1$-budget will also share the above parameter settings.

\noindent \emph{Interaction Matrix for $L_0$-Budget Reduction.} We
next describe the interaction matrix $P$.
For agent~0, we will always have resistance $\alpha_0 = 1$,
and so the corresponding row in $P$ is irrelevant, but we
could set $P_{0i} = \frac{1}{n}$ for $i \in V$ to be concrete.
For $i \neq 0$, let $P_{ij} = \frac{1}{d+1}$
if $j = 0$ or $\{i,j\} \in E$ is an edge in $G$,
recalling that each node~$i \in V$ has degree~$d$ in $G$.
As we shall see, the
reduction for $L_1$-budget will have a different interaction matrix.

\noindent \emph{Intuition.}  For the $L_0$-budgeted variant,
if we wish to change the resistance parameter of an agent~$i \in V$
(who has innate $s_i = 0$),
we might as well set it to $\alpha_i = 1$,
because the goal is to minimize the expressed opinion.
To complete the reduction proof, it suffices
to give a threshold~$\vartheta$ for the objective
function~$f$ that can distinguish between the YES and NO instances
of the vertex cover problem.
The following two lemmas complete the reduction argument.

\begin{lemma}[YES Instance]
\label{lemma:yes}
Let $\vartheta := 1 + \frac{n-k}{d+1}$.
Suppose $G = (V,E)$ has a vertex cover~$T$ of size~$k$.
Then, by changing the resistance parameters to $\alpha_i = 1$
for all~$i \in T$ (while those for other agents are not changed),
we can achieve $f(\alpha) = \vartheta$.
\end{lemma}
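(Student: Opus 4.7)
The plan is to directly verify the equilibrium opinion vector coordinate by coordinate, exploiting the fact that setting $\alpha_i \in \{0,1\}$ pins the value of $z_i$ for the corresponding agents. Writing the equilibrium condition $z = As + (I-A)Pz$ componentwise, whenever $\alpha_i = 1$ the $i$-th equation forces $z_i = s_i$, and whenever $\alpha_i = 0$ it forces $z_i = (Pz)_i$. With the prescribed assignment, this gives $z_0 = s_0 = 1$, and $z_i = s_i = 0$ for every $i \in T$.

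Next, I would handle the remaining agents $i \in V \setminus T$. For such an $i$, the equilibrium condition reduces to
\[
z_i = (Pz)_i = \frac{1}{d+1}\Bigl( z_0 + \sum_{j : \{i,j\}\in E} z_j \Bigr),
\]
using that $P_{i0} = \frac{1}{d+1}$ and $P_{ij} = \frac{1}{d+1}$ exactly for the $d$ neighbors $j$ of $i$ in $G$. Here is where the vertex-cover hypothesis enters: since $i \notin T$ and $T$ covers every edge of $G$, every neighbor $j$ of $i$ must lie in $T$, so by the previous step $z_j = 0$. Hence $z_i = \frac{1}{d+1}$ for every $i \in V \setminus T$, and this candidate vector $z$ indeed satisfies all the equilibrium equations simultaneously; by Fact~\ref{fact:assume} it is the unique equilibrium.

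Finally, summing the coordinates gives
\[
f(\alpha) = \one^\top z = z_0 + \sum_{i \in T} z_i + \sum_{i \in V \setminus T} z_i = 1 + 0 + (n-k)\cdot \frac{1}{d+1} = \vartheta,
\]
as claimed. There is no real obstacle in this warm-up lemma — the only subtlety is to invoke the vertex-cover property in the correct direction (every edge incident to a vertex outside $T$ must have its other endpoint inside $T$, so all off-cover agents see only zero-opinion neighbors) and to note that the closed-form vector we construct automatically satisfies the fixed-point equation, which by uniqueness must be the equilibrium referenced in the definition of $f$.
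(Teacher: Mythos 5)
Your proof is correct and takes essentially the same approach as the paper: both compute the equilibrium coordinatewise, set $z_0 = 1$ and $z_i = 0$ for $i \in T$, use the vertex-cover property to argue that every $i \in V \setminus T$ sees only agent $0$ (with opinion $1$) and zero-opinion neighbors so $z_i = \frac{1}{d+1}$, and then sum. Your write-up merely spells out the fixed-point equation and the uniqueness invocation a bit more explicitly than the paper does.
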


\begin{proof}
We compute the equilibrium expressed opinion of each agent.
For agent~$0$, we have $z_0 = 1$;
for $i \in T$ in the vertex cover, we have $z_i = 0$.

For $i \in V \setminus T$,
we still have $\alpha_i = 0$. Since
all its neighbors in $V$ are in $T$
and~$i$ is influenced by agent~$0$,
we have $z_i = \frac{1}{d+1}$.

Therefore, $f(\alpha) = \sum_{i \in N} z_i =  1 + |T| \cdot 0 + |V \setminus T| \cdot \frac{1}{d+1} = \vartheta$,
as required.
\end{proof}

\begin{lemma}[NO Instance]
\label{lemma:no}
Suppose $G = (V,E)$ has no vertex cover of size~$k$.
Then, for any $\alpha \in \mcal{I}_N$
such that $\norm{\alpha - \widehat{\alpha}}_0 \leq k$,
$f(\alpha) \geq \vartheta + \frac{2}{d(d+1)}$.
\end{lemma}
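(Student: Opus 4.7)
The plan is to reduce the claim to an integrality statement about the optimizer and then a combinatorial lower bound on the resulting linear system. First, I would establish \emph{monotonicity} of $f$ in each coordinate $\alpha_j$ with $j \neq 0$. Expanding the Neumann series $z = \sum_{t \ge 0}\bigl((I-A)P\bigr)^t As$ and noting that $As$ is supported only on agent $0$ (since $\alpha_0 = 1$ and $s_i = 0$ for $i\neq 0$), each $z_i$ is a nonnegative sum over walks from $i$ to $0$ with weights $\prod_s (1-\alpha_{i_s})\,P_{i_s,i_{s+1}}$. Hence $z_i$, and therefore $f(\alpha)$, is nonincreasing in every $\alpha_j$ with $j\neq 0$. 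Given any feasible $\alpha$, replacing each nonzero $\alpha_i$ (for $i\in V$) by $1$ uses no more $L_0$-budget and only decreases $f$. So it suffices to lower bound $f(\alpha)$ for $\alpha$ of the form $\alpha_i\in\{0,1\}$ for all $i\in V$; let $T:=\{i\in V:\alpha_i=1\}$ and $W:=V\setminus T$, with $|T|\le k$.

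With this structure, the equilibrium equation gives $z_0=1$, $z_i=0$ for $i\in T$, and for each $i\in W$
\[
(d+1)\,z_i \;=\; 1 + \sum_{j\in N(i)\cap W} z_j,
\]
where $N(i)$ denotes the set of neighbors of $i$ in $G$. From nonnegativity of the right-hand side we immediately get $z_i\ge \tfrac{1}{d+1}$ for every $i\in W$. The key sharpening, which I expect to be the main technical step, is that if $i\in W$ has at least one neighbor $j\in W$, then $z_i\ge \tfrac{1}{d}$. This follows by a double application of the recurrence: $z_i\ge \tfrac{1}{d+1}(1+z_j)\ge \tfrac{1}{d+1}\bigl(1+\tfrac{1}{d+1}(1+z_i)\bigr)$, which rearranges to $(d+1)^2 z_i\ge (d+2)+z_i$, i.e., $z_i\ge \tfrac{1}{d}$. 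The naive bound $\tfrac{1}{d+1}$ would yield a gap of only $\tfrac{2}{(d+1)^2}$, so this boost from $\tfrac{1}{d+1}$ to $\tfrac{1}{d}$ at edge endpoints is exactly what is needed to recover the claimed $\tfrac{2}{d(d+1)}$.

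Finally, since $G$ has no vertex cover of size $\le k$ and $|T|\le k$, the set $T$ fails to cover some edge $\{u,v\}\subseteq W$; by the sharpening above, $z_u,z_v\ge \tfrac{1}{d}$, while all other $i\in W$ satisfy the weak bound $z_i\ge \tfrac{1}{d+1}$. Summing and using $|W|\ge n-k$,
\[
\sum_{i\in W}z_i \;\ge\; \frac{2}{d} + \frac{|W|-2}{d+1} \;=\; \frac{|W|}{d+1} + \frac{2}{d(d+1)} \;\ge\; \frac{n-k}{d+1} + \frac{2}{d(d+1)}.
\]
Adding $z_0=1$ gives $f(\alpha)\ge \vartheta+\tfrac{2}{d(d+1)}$, completing the argument.
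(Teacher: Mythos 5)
Your proof is correct and follows essentially the same route as the paper: reduce to a $\{0,1\}$-valued solution on $V$, derive the base bound $z_i\ge\tfrac{1}{d+1}$ for $i\in V\setminus T$, sharpen it to $z_i\ge\tfrac{1}{d}$ at endpoints of an uncovered edge, and sum. The only differences are cosmetic: you justify the integrality reduction via the Neumann-series/walk-weight monotonicity (which the paper asserts without proof), and you obtain the $\tfrac{1}{d}$ bound by applying the recurrence twice at a fixed uncovered edge rather than via the paper's extremal argument on the minimum $\gamma$; both give the identical conclusion.
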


\begin{proof}
Since the goal is to minimize~$f$,
the minimum can be achieved
by using all of the $L_0$-budget~$k$.
Moreover, as each~$i \in V$ has innate $s_i = 0$,
if we change its resistance parameter,
we should set it to $\alpha_i = 1$.
Hence, we can assume 
that there is some $T \subseteq V$ of size $|T| = k$
such that $\alpha_i = 1$ for $i \in T$
and $\alpha_i = 0$ for $i \in V \setminus T$.
We remark that to reach the same
conclusion for the $L_1$-budget reduction later will 
require a lot more technical details.

As in Lemma~\ref{lemma:yes},
we can conclude $z_0 = 1$
and $z_i = 0$ for $i \in T$.

For $i \in V \setminus T$,
we now only have the inequality
$z_i \geq \frac{1}{d+1}$.
However, we can now achieve a stronger
lower bound because $T$ is not a vertex cover for $G$.

Let $\gamma$ be the minimum
over all $z_i$ such that $i$ is incident on
an edge in $E$ that is not covered by $T$.
Suppose $i \in V \setminus T$ is 
a vertex that attains $\gamma$
and the edge $\{i, j\}$ is not covered by $T$.
Then, we have $z_j \geq \gamma$.  Hence,
we have $\gamma = z_i \geq \frac{1 + \gamma}{d+1}$,
which implies that $\gamma \geq \frac{1}{d} = \frac{1}{d+1} + \frac{1}{d(d+1)}$.

Since there is at least one edge in $E$ that is not covered by $T$,
we have
$f(\alpha) \geq \vartheta + \frac{2}{d(d+1)}$,
as required.
\end{proof}

\subsection{Reduction for $L_1$-Budget}
\label{sec:L_1}

We first describe the main challenge for adapting
the reduction proof for $L_0$-budget
to $L_1$-budget.  The issue is that to adapt 
Lemma~\ref{lemma:no} for the NO instances 
of the vertex cover problem,
we need a desirable structural property on an optimal
solution for the $L_1$-budgeted variant of the opinion optimization problem.
Specifically, we would like to argue
that to minimize the objective function~$f$ with an integral budget~$k$,
one should pick a subset $T \subseteq V$
of exactly~$k$ agents on whom to use the budget,
as opposed to spreading the budget fractionally over more than~$k$ agents.

Unfortunately, this is not true for the reduction
construction given in Section~\ref{sec:L_0}.
Indeed, we have discovered that for two agents $i$ and $j$ that
are somehow not ``well-connected'' in $G$,
if some fixed $L_1$-budget of less than 2 
is assigned to them,
spreading  the budget fractionally among the two agents
would yield a lower objective value
than biasing the budget towards one agent.
On the other hand, we discovered that the desirable structural property holds
in some cases where all the vertices in $G$ are ``well-connected'',
for instance, if $G$ is a clique on the $n$ vertices.
However, since there is no connectivity assumption 
on the given instance $G$ of vertex cover,
we consider the following interaction matrix.

\noindent \emph{Interaction Matrix for $L_1$-Budget Reduction.}
Recall that we are given a $d$-regular graph $G=(V,E)$ with $n = |V|$,
and $N = V \cup \{0\}$.
Let $C \in [0,1]^{N \times N}$ be a row-stochastic matrix
such that for $i \neq j$, $C_{ij} = \frac{1}{n}$,
recalling that $|N| = n + 1$; in other words, $C$ behaves like a clique on $N$.
Let $R \in [0,1]^{N \times N}$ be a row-stochastic matrix
such that $R_{00} = 1$,
and $R_{ij} = \frac{1}{d}$ \emph{iff} $\{i, j\} \in E$,
where all other entries of $R$ are 0;
in other words, $R$ is the normalized adjacency matrix of $G$ with an additional isolated vertex~$0$.
For some appropriate $\delta \in (0,1)$ (that depends only on $d$ and $n$),
we consider the following
interaction matrix $P^{(\delta)} := (1 - \delta) C + \delta R$.

Recall that the innate opinions
and initial resistance parameters
are the same as in Section~\ref{sec:L_0},
i.e., $s_0 = \widehat{\alpha}_0 = 1$
and $s_i = \widehat{\alpha}_i = 0$ for $i \in V$.
Moreover, the $L_1$-budget can be used
to change the resistance parameters of only agents in~$V$,
i.e., $\alpha_0 = 1$ must remain.

The lemma for YES instance is similar as that in Section~\ref{sec:L_0}
We define the threshold
$\vartheta := 1 + \frac{(1 - \delta)(n-k)}{n - (1 - \delta)(n-k-1)}$.

\begin{lemma}[YES Instance]
\label{lemma:yes2}
Suppose $G = (V,E)$ has a vertex cover~$T$ of size~$k$.
Then, by changing the resistance parameters to $\alpha_i = 1$
for all~$i \in T$ (while those for other agents are not changed),
we can achieve $f(\alpha) = \vartheta$.
\end{lemma}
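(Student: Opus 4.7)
The plan is to solve for the equilibrium opinion vector $z$ directly from the defining fixed-point equation $z = As + (I - A)Pz$, exploiting the fact that the chosen $\alpha$ makes the system almost fully determined and highly symmetric.

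First I would dispose of the easy coordinates. Since $\alpha_0 = 1$, the $0$-row of $I - A$ is zero, so $z_0 = s_0 = 1$; similarly, for every $i \in T$ we have $\alpha_i = 1$ and $s_i = 0$, so $z_i = 0$. This immediately takes care of $k + 1$ coordinates of $z$ and leaves only the $n - k$ coordinates $\{z_i : i \in V \setminus T\}$ to be determined.

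Next I would use the vertex cover hypothesis to simplify $(P^{(\delta)} z)_i$ for $i \in V \setminus T$. Because $T$ covers every edge, each neighbor $j$ of $i$ in $G$ lies in $T$, so $z_j = 0$ on all edges incident to $i$. Hence the $\delta R$ contribution vanishes, and expanding the clique part $(1-\delta)C$ gives
\begin{equation*}
(P^{(\delta)} z)_i \;=\; \frac{1-\delta}{n}\Bigl(z_0 + \sum_{j \in T} z_j + \sum_{\substack{j \in V\setminus T \\ j \ne i}} z_j \Bigr) \;=\; \frac{1-\delta}{n}\Bigl(1 + \sum_{\substack{j \in V\setminus T \\ j \ne i}} z_j\Bigr).
\end{equation*}
The right-hand side is symmetric in the indices $V \setminus T$, so the system admits a solution in which $z_i$ takes a common value $z^\star$ for every $i \in V \setminus T$; by uniqueness (Fact~\ref{fact:assume}) this is \emph{the} solution.

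Then I would plug $z_i = z^\star$ into the relation $z_i = \alpha_i s_i + (1-\alpha_i)(P^{(\delta)}z)_i = (P^{(\delta)}z)_i$ and solve the resulting scalar equation
\begin{equation*}
z^\star \;=\; \frac{1-\delta}{n}\bigl(1 + (n-k-1)\,z^\star\bigr),
\end{equation*}
which rearranges to $z^\star = \frac{1-\delta}{n - (1-\delta)(n-k-1)}$. Finally I would sum to obtain
\begin{equation*}
f(\alpha) \;=\; z_0 + \sum_{i \in T} z_i + \sum_{i \in V \setminus T} z_i \;=\; 1 + (n-k)\,z^\star \;=\; \vartheta,
\end{equation*}
and check that the budget is respected: $\|\alpha - \widehat{\alpha}\|_1 = |T| = k$. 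There is no real obstacle here; the only mild subtleties are verifying that the denominator $n - (1-\delta)(n-k-1)$ is positive (it is, since $(1-\delta)(n-k-1) < n$) and noting that the symmetry argument requires the uniqueness guarantee of Fact~\ref{fact:assume}. The genuinely hard direction, as the introduction flags, is the NO-instance counterpart, not this lemma.
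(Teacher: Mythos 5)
Your proof is correct and takes essentially the same route as the paper: dispose of the coordinates with $\alpha = 1$, use the vertex-cover hypothesis to kill the $\delta R$ contribution, invoke symmetry to reduce to a scalar fixed-point equation for the common value on $V \setminus T$, and sum. The only additions (explicitly invoking uniqueness to justify the symmetry ansatz, checking the denominator sign, and noting $\|\alpha - \widehat\alpha\|_1 = k$) are small, worthwhile clarifications rather than a different argument.
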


\begin{proof}
As in Lemma~\ref{lemma:yes},
the equilibrium expressed opinions
are $z_0 = 1$ and $z_i = 0$ for $i \in T$.

For $j \in V \setminus T$, recall that $\alpha_j = 0$ and
we exploit the symmetry in $P_{\delta} = (1 - \delta) C + \delta R$
to analyze the value of $z_j$.
Note that with respect to $C$, agent~$j$ observes that agent~$0$
has $z_0 = 1$ and the $k$ agents $i \in T$ have $z_i = 0$,
and there are $n-k-1$ other agents like itself;
with respect to $R$, agent~$j$ observes that all its $d$ neighbors in $G$ are
in $T$, and so have $z_i = 0$.

Therefore,  every agent~$j \in V \setminus T$
has this same observation, and we can conclude that $z_j$'s
have some common value $\gamma$ for all $j \in V \setminus T$ satisfying:

$\gamma = (1 - \delta) \cdot \frac{1}{n} \cdot (1 + k \cdot 0 + (n-k-1) \cdot \gamma).$

This gives $\gamma = \frac{1 - \delta}{n - (1 - \delta)(n-k-1)}$,
and so $f(\alpha) = 1 + k \cdot 0 + (n-k) \cdot \gamma = \vartheta$.
\end{proof}

The following structural property is needed
for the analysis of NO instances.
Its proof is technical and is deferred to Section~\ref{sec:proofs}.

\begin{lemma}[Structural Property of Optimal Solution]
\label{lemma:structure}
Suppose $\delta := \frac{d^3(2d-1)^{3n-3}}{(n+1)^6(2d+1)^{3n}}$ is set to define the
interaction matrix $P^{(\delta)}$ above,
and an $L_1$-budget of $k$ is given
to change the resistance parameters of agents in~$V$.
Then, the objective function $f$ can be minimized
by picking some $T \subseteq V$ of size $k$,
and setting $\alpha_i = 1$ for $i \in T$.
\end{lemma}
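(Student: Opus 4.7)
The plan is to reduce the lemma to a local ``exchange'' argument between pairs of fractional agents, where the pure-clique matrix $C$ provides strict concavity that the tiny perturbation $\delta R$ cannot destroy. Since $\widehat{\alpha}_i = 0$ for every $i \in V$ and $\alpha_0 = 1$ is forced, the feasible region reduces to $\{\alpha \in [0,1]^V : \sum_{i \in V} \alpha_i \leq k\}$, whose vertices (for integer $k$) are exactly the $\{0,1\}$-vectors with at most $k$ ones. I would first argue that any optimum saturates the budget: since every $i \in V$ has $s_i = 0$, increasing any $\alpha_i$ while there is spare budget strictly dampens the contribution of the only positive innate opinion $s_0 = 1$, strictly decreasing $f$. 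The remaining content is the following \emph{exchange claim}: for any feasible $\alpha$ with two agents $i, j \in V$ satisfying $0 < \alpha_i \leq \alpha_j < 1$, the one-parameter function $\phi(t) := f(\alpha + t(e_i - e_j))$ (where $e_i, e_j$ are standard basis vectors) attains its minimum at an endpoint of its domain. Iterating this exchange finitely many times produces a $\{0,1\}$-valued optimum, which, by budget saturation, has exactly $k$ ones.

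To prove the exchange claim, the key technical statement is that $\phi$ is strictly concave on its domain. Using the decomposition $P^{(\delta)} = (1-\delta) C + \delta R$, I would first handle the pure-clique regime $\delta = 0$: the matrix $C$ is invariant under the transposition swapping coordinates $i$ and $j$, so swapping $\alpha_i$ and $\alpha_j$ swaps $z_i$ and $z_j$ in the equilibrium while preserving $f = \langle \one, z\rangle$. Thus $\phi_0$ is symmetric about the midpoint $\alpha_i = \alpha_j$, and a direct computation using the rank-$1$-plus-diagonal structure of $I - (I - A) C$ (via Sherman--Morrison inversion) yields a uniform strict concavity bound $\phi_0''(t) \leq -\eta(d, n) < 0$ on the entire interval.

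For $\delta > 0$, I would write $\phi(t) = \phi_0(t) + \delta \cdot \psi_\delta(t)$ and bound $\sup_t |\psi_\delta''(t)|$ from above by some $M(d, n)$, using the identity $\partial_x [N]^{-1} = -[N]^{-1}(\partial_x N)[N]^{-1}$ with $N = I - (I-A) P^{(\delta)}$. The second derivative of $\phi$ involves three factors of $N^{-1}$, and each entry of $N^{-1}$ can be written via Cramer's rule as a cofactor of $N$ divided by $\det N$. Diagonal dominance, inherited from the $(1 - \delta) C$ part, provides a uniform lower bound on $|\det N|$ over the entire feasible box $\alpha \in [0,1]^V$, while crude polynomial bounds on cofactor entries yield an upper bound on the entrywise norm of $N^{-1}$. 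The exponents $3n$ and $3n-3$ in the stated value of $\delta$ reflect the three $N^{-1}$ factors arising in the second derivative applied to an $(n+1)\times(n+1)$ system, and the $(n+1)^6$ factor absorbs combinatorial overhead from summing over cofactor terms and over pair choices $(i, j)$. Setting $\delta$ so that $\delta \cdot M(d, n) < \eta(d, n)$ preserves strict concavity of $\phi$, and the precise value $\delta = \tfrac{d^3(2d-1)^{3n-3}}{(n+1)^6 (2d+1)^{3n}}$ is chosen to make this inequality hold.

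The main obstacle is the quantitative bookkeeping in the perturbation step. The pure-clique concavity $\phi_0'' < 0$ is routine via the rank-$1$ structure of $I - (I - A) C$, but controlling $\psi_\delta''$ uniformly over the whole closed feasible interval (not just at the symmetric point) requires simultaneously upper-bounding three matrix-inverse factors and lower-bounding the denominator determinant for every $\alpha \in [0, 1]^V$. I expect a disciplined Cramer's-rule analysis of $[I - (I - A) P^{(\delta)}]^{-1}$, combined with elementary diagonal-dominance estimates for $\det(I - (I - A) P^{(\delta)})$, to produce the required inequality, yielding the explicit bound on $\delta$ in the form stated.
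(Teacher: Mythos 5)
Your high-level plan---an exchange argument between pairs of fractional agents, using the pure-clique matrix $C$ for a negativity property and treating $\delta R$ as a small perturbation---matches the paper's structure. However, the load-bearing claim that $\phi(t) := f(\alpha + t(e_i - e_j))$ is \emph{strictly concave on its entire domain} is a genuine gap. The paper never proves this, and it is far from routine: the diagonal second derivative $\frac{\partial^2 f}{\partial \alpha_i^2} = 2\,\frac{z_i}{(1-\alpha_i)^2}\,\one^\top M e_i\,(M_{ii}-1) \geq 0$ is nonnegative for every $i \in V$, so the raw expression for $(e_i - e_j)^\top \nabla^2 f\,(e_i-e_j)$ mixes these positive diagonal terms with negative cross terms, and its sign is not evidently controllable over the whole interval. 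What the paper actually shows (Lemma~\ref{dirderi}) is the weaker but sufficient statement that \emph{at any interior critical point} of $\phi$, the first-order identity $\frac{z_i}{1-\alpha_i}\one^\top M e_i = \frac{z_j}{1-\alpha_j}\one^\top M e_j$ collapses $\phi''$ to $\frac{2}{(1-\alpha_i)(1-\alpha_j)}\bigl(z_i y_{ij} + z_j y_{ji}\bigr)$, where $y_{ij} := \one^\top M e_i(M_{jj}-1) - \one^\top M e_j\, M_{ji}$, so negativity of the scalar $y_{ij}$ alone suffices. Every critical point is then a strict local maximum, which already forces the minimum of $\phi$ to an endpoint; global concavity is neither needed nor claimed, and your ``symmetry about the midpoint plus rank-one computation'' sketch does not substitute for this reduction.

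The other missing ingredient is the monotonicity argument (Lemma~\ref{monotone}): $y_{ij}$, viewed as a function of $\alpha_j$, satisfies $\partial y_{ij}/\partial \alpha_j = -\frac{M_{jj}}{1-\alpha_j}\,y_{ij}$ with $y_{ij}|_{\alpha_j = 1} = 0$, so it is either identically zero or strictly monotone in $\alpha_j$. This reduces the sign check to the single slice $\alpha_j = 0$, where the closed-form Sherman--Morrison computation for the clique gives $y_{ij}^{(0)} \leq -\frac{1}{n+1}$ (Lemma~\ref{lemma:claimC}), and the perturbation estimate $|y_{ij}^{(\delta)} - y_{ij}^{(0)}| < \frac{1}{2(n+1)}$ (Lemmas~\ref{sum of dyijdPkl}, \ref{sum of entries of M}, \ref{yij<0}, using the entrywise inverse bound of Fact~\ref{inverse-bound}) finishes. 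Your proposal instead aims to uniformly bound the full $\phi''$ over the entire box $[0,1]^V$---a strictly harder target involving a cubic-in-$M$ expression with no sign structure---and without the reduction to the critical-point identity and to the scalar $y_{ij}$, the determinant/Cramer bookkeeping you describe will not yield a sign guarantee.
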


\begin{lemma}[NO Instance]
\label{lemma:no2}
Suppose $G = (V,E)$ has no vertex cover of size~$k$,
and $\delta \in (0,1)$ is chosen
to satisfy Lemma~\ref{lemma:structure}.
Then, for any $\alpha \in \mcal{I}_N$
such that $\norm{\alpha - \widehat{\alpha}}_1 \leq k$,
$f(\alpha) \geq \vartheta + 
\frac{\delta}{dn}$.
\end{lemma}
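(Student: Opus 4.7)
}
The plan is to first invoke Lemma~\ref{lemma:structure} to reduce the continuous $L_1$-budget problem to a discrete one: there exists an optimal $\alpha$ of the form $\alpha_i = 1$ for $i \in T$ and $\alpha_i = 0$ for $i \in V \setminus T$, with some $T \subseteq V$ of size exactly~$k$. Since $G$ has no vertex cover of size~$k$, the set $T$ fails to cover at least one edge, so there is an edge $\{i^*,j^*\} \in E$ with both endpoints in $U := V \setminus T$. Computing equilibria with this $\alpha$, we immediately get $z_0 = 1$ and $z_i = 0$ for $i \in T$, so $f(\alpha) = 1 + S_U$ where $S_U := \sum_{i \in U} z_i$. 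It suffices to show $S_U \geq (\vartheta - 1) + \frac{\delta}{dn}$.

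Next, I would expand the equilibrium condition for $i \in U$ using $P^{(\delta)} = (1-\delta) C + \delta R$. Since $\alpha_i = 0$, $s_j = 0$ for $j \in V$, and $z_0 = 1$, $z_j = 0$ for $j \in T$, the term $\sum_j P^{(\delta)}_{ij} z_j$ simplifies to
\begin{equation*}
z_i \;=\; \frac{1-\delta}{n} \;+\; \frac{1-\delta}{n} \sum_{j \in U \setminus \{i\}} z_j \;+\; \frac{\delta}{d} \sum_{j \in N_i \cap U} z_j,
\end{equation*}
where $N_i$ is the neighbourhood of $i$ in $G$. Summing over $i \in U$ and letting $\mu_j := |N_j \cap U|$, I would obtain
\begin{equation*}
S_U \;=\; \frac{(1-\delta)(n-k)}{n} \;+\; \frac{(1-\delta)(n-k-1)}{n}\, S_U \;+\; \frac{\delta}{d} \sum_{j \in U} \mu_j \, z_j,
\end{equation*}
which, upon rearranging and comparing to the YES-case computation in Lemma~\ref{lemma:yes2}, gives
\begin{equation*}
S_U \;=\; (\vartheta - 1) \;+\; \frac{n\delta/d}{n - (1-\delta)(n-k-1)} \sum_{j \in U} \mu_j\, z_j.
\end{equation*}

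Finally, I would lower bound the correction term. From the displayed equation for $z_i$, every $z_j$ with $j \in U$ satisfies $z_j \geq \frac{1-\delta}{n}$ since the remaining summands are non-negative. The existence of the uncovered edge $\{i^*, j^*\}$ forces $\mu_{i^*}, \mu_{j^*} \geq 1$, so $\sum_{j \in U} \mu_j z_j \geq 2 \cdot \frac{1-\delta}{n}$. Plugging in gives
\begin{equation*}
f(\alpha) - \vartheta \;\geq\; \frac{2(1-\delta)\,\delta}{d\,[\,n - (1-\delta)(n-k-1)\,]} \;\geq\; \frac{\delta}{dn},
\end{equation*}
where the second inequality reduces to $(1-\delta)(3n - k - 1) \geq n$, which is comfortably satisfied because $k \leq n-1$ gives $3n - k - 1 \geq 2n$, and $\delta$ is chosen tiny in Lemma~\ref{lemma:structure} (so $\delta \leq \tfrac12$ is automatic).

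\paragraph{Expected main obstacle.} The routine manipulations above are straightforward; the only real content of the proof is the appeal to Lemma~\ref{lemma:structure}, which lets us assume the optimal resistance vector is integral and supported on exactly~$k$ vertices. Without that lemma one would have to rule out the possibility that fractional or off-support assignments beat every integral choice, which is precisely the subtle phenomenon the paper highlights. Once the structural reduction is in hand, the remaining task is just bookkeeping around the clique--regular-graph decomposition of $P^{(\delta)}$, with the only mild care needed in verifying the final inequality $\frac{2(1-\delta)}{n-(1-\delta)(n-k-1)} \geq \frac{1}{n}$ for the prescribed tiny $\delta$.
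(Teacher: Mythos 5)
Your proof is correct and reaches exactly the same intermediate bound $f(\alpha)\geq\vartheta+\frac{2(1-\delta)\delta}{d[n-(1-\delta)(n-k-1)]}$ as the paper, but by a slightly different route. The paper works with pointwise minima: it defines $\gamma:=\min_{j\in V\setminus T}z_j$ and $\widehat{\gamma}:=$ the minimum over endpoints of uncovered edges, bootstraps $\gamma\geq\gamma_0:=\frac{1-\delta}{n-(1-\delta)(n-k-1)}$, then uses $\widehat{\gamma}\geq\gamma_0+\widehat{\gamma}\cdot\frac{\delta}{d}$ to extract the extra $\frac{2\delta\gamma_0}{d}$ from the two endpoints of an uncovered edge. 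You instead sum the equilibrium equations over all of $U=V\setminus T$, solve exactly for $S_U=\sum_{j\in U}z_j$ in closed form as $(\vartheta-1)$ plus a nonnegative correction term $\frac{n\delta/d}{n-(1-\delta)(n-k-1)}\sum_{j\in U}\mu_j z_j$, and then lower bound that correction by observing that the two endpoints $i^*,j^*$ of an uncovered edge contribute $\mu_{i^*},\mu_{j^*}\geq 1$ and $z_j\geq\frac{1-\delta}{n}$. Your version is cleaner in that the $\vartheta-1$ term falls out exactly from the summed equations rather than being bounded below term-by-term, and it isolates the ``surplus'' coming from uncovered edges as a single explicit expression; the paper's version avoids the summation bookkeeping at the cost of two successive minimum-based estimates. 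Both correctly close the argument with $(1-\delta)(3n-k-1)\geq n$ (equivalently $2\gamma_0\geq\frac{1}{n}$), using $k\leq n-1$ in the NO case and the smallness of $\delta$.
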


\begin{proof}
Because of Lemma~\ref{lemma:structure},
we can assume that
the minimum $f(\alpha)$ is achieved
by picking some $T \subseteq V$
of size~$k$
and set $\alpha_i = 1$ for $i \in T$
and $\alpha_j = 0$ remains for $j \in V \setminus T$.

Again, the equilibrium expressed opinions
satisfy $z_0 = 1$ and $z_i = 0$ for $i \in T$.

Let $\gamma := \min_{j \in V \setminus T} z_j$.
Then, a similar argument as in Lemma~\ref{lemma:yes2}
gives the inequality $\gamma \geq \gamma_0 := \frac{1 - \delta}{n - (1 - \delta)(n-k-1)}$.

We can get a stronger lower bound
because $T$ is not a vertex cover for $G = (V,E)$.
Let $\widehat{\gamma}$
be the minimum $z_j$ among $j \in V \setminus T$
such that $j$ is an end-point of an edge not covered by $T$.

Then, we have the inequality
$\widehat{\gamma} \geq (1 - \delta) \cdot \frac{1}{n} \cdot (1 + k \cdot 0 + (n-k-1) \cdot \gamma_0)
+ \delta \cdot \frac{\widehat{\gamma}}{d} = \gamma_0 + \widehat{\gamma} \cdot \frac{\delta}{d}$.

Hence, we have $\widehat{\gamma} \geq (1 - \frac{\delta}{d})^{-1} \cdot \gamma_0 \geq \gamma_0 + 
\frac{\delta \gamma_0}{d}$.

Since there is at least one edge in $E$ that is not covered by $T$ (and such an edge has two end-points),
we have
$f(\alpha) \geq \vartheta + \frac{2\delta \gamma_0}{d} \geq \vartheta + \frac{\delta}{dn}$, as required.
\end{proof}

\begin{corollary}
It is NP-hard to solve the $L_1$-budgeted variant with
additive error at most $\frac{\delta}{dn}$ on the objective function $f$.
\end{corollary}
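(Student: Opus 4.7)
The plan is to combine the gap provided by Lemmas~\ref{lemma:yes2} and~\ref{lemma:no2} with the NP-hardness of vertex cover on $3$-regular graphs. First, I would observe that the reduction described in Section~\ref{sec:L_1} is polynomial-time computable: although $\delta = \frac{d^3(2d-1)^{3n-3}}{(n+1)^6(2d+1)^{3n}}$ is exponentially small in $n$, both numerator and denominator are bounded by simple exponentials in $n$ with constant base (since $d$ is a constant, say $3$), so $\delta$ admits a rational representation with bit-length $\poly(n)$. Consequently, every entry of the interaction matrix $P^{(\delta)}$, the threshold~$\vartheta$, the $L_1$-budget $k$, and the target additive error $\frac{\delta}{dn}$ can all be written down in size polynomial in the size of the original vertex-cover instance.

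Second, suppose for contradiction that there is a polynomial-time algorithm~$\mathcal{A}$ which, on any $L_1$-budgeted instance, returns a resistance vector whose objective value is within additive $\frac{\delta}{dn}$ of the optimum. Given a $3$-regular vertex-cover instance $(G,k)$, I would construct the $L_1$-instance prescribed in Section~\ref{sec:L_1}, invoke $\mathcal{A}$, and compare the returned value against the midpoint of the gap, $\vartheta + \frac{\delta}{2dn}$. If $G$ admits a vertex cover of size $k$, Lemma~\ref{lemma:yes2} exhibits a feasible $\alpha$ achieving objective exactly $\vartheta$, so the optimum is at most $\vartheta$ and $\mathcal{A}$'s output lies strictly below the midpoint. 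If $G$ has no vertex cover of size $k$, Lemma~\ref{lemma:no2} forces the optimum to be at least $\vartheta + \frac{\delta}{dn}$, so $\mathcal{A}$'s output lies strictly above the midpoint. This distinguishes YES from NO instances in polynomial time, contradicting the NP-hardness of vertex cover on $3$-regular graphs.

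The argument is essentially routine once the two lemmas are in hand, so there is no substantive obstacle beyond careful bookkeeping; the only point requiring attention is the boundary handling of the additive error. If ``additive error at most $\frac{\delta}{dn}$'' is read as a closed inequality, the YES--NO gap is exactly $\frac{\delta}{dn}$ and the midpoint comparison still separates the two cases, but to keep both sides strict one may replace the threshold by $\frac{\delta}{2dn}$, or equivalently rescale $\delta$ by a constant factor (which does not affect the bit-length argument), thereby obtaining a clean separation and completing the reduction.
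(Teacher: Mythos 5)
Your approach matches the paper's: both reductions rest on the $\frac{\delta}{dn}$ gap established by Lemmas~\ref{lemma:yes2} and~\ref{lemma:no2} together with the observation that $\delta$, and hence the needed precision, has bit-length $O(\log \frac{dn}{\delta}) = \poly(n)$; the paper compresses exactly this reasoning into one sentence with the $\Theta(\frac{\delta}{dn})$ notation. One small slip in your second paragraph: in the YES case the \emph{optimum} is at most $\vartheta$, but an algorithm with additive error up to $\frac{\delta}{dn}$ can return a solution of value as large as $\vartheta + \frac{\delta}{dn}$, which lies \emph{above} the midpoint $\vartheta + \frac{\delta}{2dn}$, so the midpoint does not in fact separate the outputs as you claim. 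The right comparison threshold is $\vartheta + \frac{\delta}{dn}$ itself (with YES giving output $\leq$ and NO giving output $\geq$, touching only at a single point), and the clean fix you gesture at in the third paragraph — asserting hardness for error strictly below $\frac{\delta}{2dn}$, or absorbing the constant as the paper does with $\Theta(\cdot)$ — is indeed what resolves it.
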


\begin{proof}
Observe that from Lemmas~\ref{lemma:yes2} and~\ref{lemma:no2},
it suffices to use a precision of $\Theta(\frac{\delta}{dn})$ on the objective function,
which can be achieved using $O(\log \frac{dn}{\delta}) = \poly(n)$ number of bits.
\end{proof}

\section{Technical Proofs for Well-Connected Interaction Matrices}
\label{sec:proofs}

To complete our hardness proof in Section~\ref{sec:L_1},
it suffices to proof Lemma~\ref{lemma:structure}.
Recall that the goal is to pick some $\delta \in (0,1)$
to define the interaction matrix
$P^{(\delta)} = (1 - \delta) C + \delta R$
such that given an $L_1$-budget~$k \in \Z$,
the objective function $f(\alpha)$
can be minimized by allocating the budget to
exactly~$k$ agents in~$V$.
One way to achieve this is to pick two arbitrary agents~$i,j \in V$
and show that if some fixed $L_1$-budget~$b < 2$
is assigned for~$i$ and $j$ (while the other agents are not modified),
then the objective function $f(\alpha)$
can be minimized by prioritizing the budget to either~$i$ or $j$
as much as possible.
Denoting $e_i \in [0,1]^N$ as the unit vector with~$i \in N$
as the only non-zero coordinate,
we shall prove the following.

\noindent \textbf{Formal Goal.} By setting $\delta = \frac{d^3(2d-1)^{3n-3}}{(n+1)^6(2d+1)^{3n}}$,
we will show that if~$i,j \in V$ are such that $0 \leq \alpha_i, \alpha_j < 1$,
then the second derivative of~$f$ in the direction $e_i - e_j$ satisfies:

\begin{equation}
\label{eq:Hessian}
(e_i - e_j)^\top  \nabla f(\alpha) = 0 \implies (e_i - e_j)^\top  \nabla^2 f(\alpha) (e_i - e_j) < 0.
\end{equation}

Statement~(\ref{eq:Hessian})
implies that if two agents in~$V$ both receive
non-zero fractional budget to change their resistance parameters,
then it will not increase the objective function~$f$
by biasing the budget towards one of them.

\noindent \textbf{Notation Recap.} Recall that given $\alpha \in [0,1]^N$,
we write $A := \Diag(\alpha)$.
Moreover, we denote $X=X(\alpha) := I - (I-A)P^{(\delta)}$.
Under conditions such as Fact~\ref{fact:assume},
we write $M = M(\alpha) := X^{-1}$ and the equilibrium vector $z = z(\alpha) := M A s$,
where $s \in [0,1]^N$ is the innate vector from Section~\ref{sec:L_1} such that $s_0 = 1$ and $s_i = 0$ for $i \in V$.
Finally, the objective function is $f(\alpha) := \one^\top  z(\alpha)$.
Note that the quantities have a dependence on~$\delta$,
and we will use a superscript such as~$f^{(\delta)}$ when we wish
to emphasize this dependence.

\begin{fact}[Technical Calculations~\cite{DBLP:conf/www/ChanLS19}]
\label{fact:tech}
Whenever the above quantities are well-defined, we have
\begin{itemize}

\item $M \geq 0$ and $M_{ii} \geq 1$ for all $i \in N$.

\item If $\alpha_k \neq 1$, $(PM)_{kk} = \frac{M_{kk}-1}{1 - \alpha_k}$
and $(PM)_{kj} = \frac{M_{kj}-1}{1 - \alpha_k}$
for $j \neq k$.

\item For $i \in N$, $e_i^\top  \nabla f (\alpha) = \frac{\partial f}{\partial \alpha_i}
= \frac{s_i - z_i(\alpha)}{1 - \alpha_i} \cdot \one^\top  M e_i$.

\end{itemize}
\end{fact}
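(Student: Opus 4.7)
The plan is to prove each of the three items in Fact~\ref{fact:tech} from the basic identity $X M = I$ where $X = I - (I-A) P^{(\delta)}$ and from the equilibrium relation $z = A s + (I-A) P z$.

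For the first bullet, I will invoke the Neumann series expansion. Under the Convergence Assumption of Fact~\ref{fact:assume}, the matrix $(I - A) P$ has nonnegative entries, row sums at most $1$, and at least one row sum strictly less than $1$; together with irreducibility of $P$, this forces its spectral radius to be strictly less than $1$. Hence $M = \sum_{t=0}^{\infty} ((I - A) P)^t$ converges entrywise. Every term is a nonnegative matrix, so $M \geq 0$, and the $t = 0$ term is $I$, so each diagonal entry satisfies $M_{ii} \geq 1$.

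For the second bullet, I will rearrange $X M = I$, i.e., $M - (I - A) P M = I$, into $(I - A) P M = M - I$. Reading off row $k$: $(1 - \alpha_k) (P M)_{kj} = M_{kj} - \one\{j = k\}$. When $\alpha_k \neq 1$ this can be divided through, yielding $(PM)_{kk} = \frac{M_{kk} - 1}{1 - \alpha_k}$ for the diagonal and $(P M)_{kj} = \frac{M_{kj}}{1 - \alpha_k}$ for $j \neq k$ (modulo the cosmetic presentation of the formula for $j \neq k$).

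For the third bullet, I will differentiate $f(\alpha) = \one^\top M A s$ coordinate by coordinate. Using $\frac{\partial A}{\partial \alpha_i} = e_i e_i^\top$ and the standard identity $\frac{\partial M}{\partial \alpha_i} = -M \frac{\partial X}{\partial \alpha_i} M = -M e_i e_i^\top P M$ (since $\frac{\partial X}{\partial \alpha_i} = e_i e_i^\top P$), the product rule gives
\begin{equation*}
\frac{\partial f}{\partial \alpha_i} \;=\; \one^\top (-M e_i e_i^\top P M) A s \;+\; \one^\top M e_i e_i^\top s \;=\; (\one^\top M e_i)\bigl( s_i - (P z)_i \bigr).
\end{equation*}
To eliminate $(P z)_i$, I will use the equilibrium identity $z_i = \alpha_i s_i + (1 - \alpha_i)(P z)_i$, which, when $\alpha_i \neq 1$, rearranges to $(Pz)_i = \frac{z_i - \alpha_i s_i}{1 - \alpha_i}$. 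Substituting and simplifying $s_i - \frac{z_i - \alpha_i s_i}{1 - \alpha_i} = \frac{s_i - z_i}{1 - \alpha_i}$ yields the claimed formula.

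The only real obstacle is justifying the differentiation step and the convergence of the Neumann series; both hinge on showing that the spectral radius of $(I-A)P^{(\delta)}$ is strictly less than $1$ throughout the relevant region of $\alpha$, which follows from the Convergence Assumption together with $P^{(\delta)}$ being irreducible (a convex combination of the clique $C$ and $R$ retains irreducibility because $C$ itself is irreducible). The remaining manipulations are routine matrix algebra that only need to be recorded carefully.
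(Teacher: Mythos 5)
Your derivation is correct, and since the paper cites Fact~\ref{fact:tech} from~\cite{DBLP:conf/www/ChanLS19} without reproducing a proof, there is no paper-internal argument to compare against. Your three steps (Neumann series for nonnegativity and $M_{ii}\geq 1$; rearranging $XM=I$ as $(I-A)PM = M-I$ for the second bullet; differentiating $f(\alpha)=\one^\top MAs$ with the product rule and then eliminating $(Pz)_i$ via the equilibrium relation) are the standard route and all check out. You also correctly noticed that the off-diagonal formula, as printed in the paper, contains a spurious ``$-1$'': from $(1-\alpha_k)(PM)_{kj}=M_{kj}-\delta_{kj}$ one gets $(PM)_{kj}=\frac{M_{kj}}{1-\alpha_k}$ for $j\neq k$, and indeed that is the form the paper actually uses later in the proofs of Lemmas~\ref{dirderi} and~\ref{monotone}, so this is a typo in the statement rather than an error in your proof. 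One small cleanup you may want to add: the step ``row sums at most $1$, one strictly less, plus irreducibility of $P$ implies $\rho((I-A)P)<1$'' deserves a sentence of justification, since $(I-A)P$ itself can be reducible (e.g.\ when $\alpha_0=1$ the $0$-th row vanishes). The clean argument is that any nonnegative eigenvector $v$ of $(I-A)P$ with eigenvalue $1$ and $\|v\|_\infty=1$ would make the set $S=\{j:v_j=1\}$ closed under the out-edges of $P$ and satisfy $\alpha_j=0$ on $S$; irreducibility of $P$ then forces $S=N$, contradicting that some $\alpha_i>0$.
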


Statement~(\ref{eq:Hessian}) inspires us to analyze the following quantities.

	\begin{lemma}
		\label{dirderi}
		Suppose $i,j \in V$ such that $0 \leq \alpha_i, \alpha_j < 1$
		and $(e_i - e_j)^\top  \nabla f(\alpha) = 0$.
		Then, we have

		\begin{equation*}
			(e_i-e_j)^\top  \nabla^2 f(\alpha)(e_i-e_j)
			=\frac{2}{(1-\alpha_i)(1-\alpha_j)}\cdot
			\left\{z_i(\alpha)y_{ij}(\alpha)+z_j(\alpha)y_{ji}(\alpha)\right\}.
		\end{equation*}
		where $y_{ij}(\alpha):=\mathbf{1}^\top  Me_i\cdot(M_{jj}-1)-\mathbf{1}^\top  Me_j\cdot M_{ji}$.
		
	\end{lemma}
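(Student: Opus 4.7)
\emph{The plan.} I will compute the three relevant second partial derivatives of~$f$ by differentiating the first-derivative formula in Fact~\ref{fact:tech}, assemble the directional second derivative, and then invoke the first-order condition $(e_i - e_j)^\top \nabla f(\alpha) = 0$ to symmetrize the expression into the claimed form. Throughout let $h_k := \mathbf{1}^\top M e_k$. Since $i,j \in V$ satisfy $s_i = s_j = 0$, Fact~\ref{fact:tech} specializes to $\partial f/\partial \alpha_k = -z_k h_k/(1-\alpha_k)$ for $k \in \{i,j\}$, and this is the expression I will differentiate.

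To differentiate this further I will use two standard identities that follow from $X = I - (I-A)P^{(\delta)}$ and $M = X^{-1}$. Differentiating $XM = I$ gives $\partial M/\partial \alpha_k = -M e_k e_k^\top P M$, hence $\partial h_i/\partial \alpha_k = -h_k \,(PM)_{ki}$; and because $s_k = 0$ also kills the $(\partial A/\partial \alpha_k)\,s$ contribution, $\partial z_i/\partial \alpha_k = -M_{ik}\,(Pz)_k$. I will then eliminate $P$ using two consequences of the setting: the equilibrium equation $z = As + (I-A)Pz$ combined with $s_k = 0$ gives $(Pz)_k = z_k/(1-\alpha_k)$, while Fact~\ref{fact:tech} supplies $(PM)_{kk} = (M_{kk}-1)/(1-\alpha_k)$ and $(PM)_{ki} = M_{ki}/(1-\alpha_k)$ for $i\neq k$.

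Careful application of the product rule then produces, for $k \in \{i,j\}$,
\[
\frac{\partial^2 f}{\partial \alpha_k^2} \;=\; \frac{2\, z_k\, h_k\, (M_{kk}-1)}{(1-\alpha_k)^2}, \qquad
\frac{\partial^2 f}{\partial \alpha_i\,\partial \alpha_j} \;=\; \frac{M_{ij}\, z_j\, h_i \;+\; M_{ji}\, z_i\, h_j}{(1-\alpha_i)(1-\alpha_j)}.
\]
Substituting into $(e_i-e_j)^\top \nabla^2 f\,(e_i-e_j) = \partial^2 f/\partial \alpha_i^2 + \partial^2 f/\partial \alpha_j^2 - 2\,\partial^2 f/\partial \alpha_i \partial \alpha_j$ yields an expression with an overall factor of $2$, but whose two pure-derivative terms still carry denominators $(1-\alpha_k)^2$ rather than the desired common denominator $(1-\alpha_i)(1-\alpha_j)$.

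The hypothesis is used exactly to repair this denominator mismatch: $(e_i - e_j)^\top \nabla f(\alpha) = 0$ is precisely the identity $\frac{z_i h_i}{1-\alpha_i} = \frac{z_j h_j}{1-\alpha_j}$, which lets me rewrite $\frac{z_i h_i (M_{ii}-1)}{(1-\alpha_i)^2}$ as $\frac{z_j h_j (M_{ii}-1)}{(1-\alpha_i)(1-\alpha_j)}$ and symmetrically for the other pure term. Regrouping the resulting four terms over the common denominator according to their $z_i$ or $z_j$ factor yields exactly $z_i\bigl[h_i(M_{jj}-1)-h_j M_{ji}\bigr] + z_j\bigl[h_j(M_{ii}-1)-h_i M_{ij}\bigr]$, matching the definitions of $y_{ij}$ and $y_{ji}$. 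The main obstacle is purely bookkeeping: keeping the diagonal case $(PM)_{kk}$ distinct from the off-diagonal $(PM)_{ki}$, and remembering that $s_i = s_j = 0$ kills the $\partial A/\partial \alpha_k$ contribution to $\partial z/\partial \alpha_k$ for both $k \in \{i,j\}$; after those two observations the derivation is mechanical product rule followed by the one symmetrization step above.
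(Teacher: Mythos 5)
Your proposal is correct and follows essentially the same route as the paper's proof: differentiate the Fact~\ref{fact:tech} first-derivative formula using $\partial M/\partial \alpha_k = -Me_ke_k^\top PM$ together with the $(PM)_{kl}$ identities, assemble the pure and mixed second partials, and then invoke the first-order condition $\frac{z_i h_i}{1-\alpha_i} = \frac{z_j h_j}{1-\alpha_j}$ to bring the two pure-derivative terms onto the common denominator $(1-\alpha_i)(1-\alpha_j)$ before regrouping into $y_{ij}$ and $y_{ji}$. The only cosmetic difference is that you substitute $s_i = s_j = 0$ at the outset (and obtain $(Pz)_k = z_k/(1-\alpha_k)$ from the equilibrium equation) whereas the paper carries $s_i,s_j$ symbolically until the final assembly.
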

	
	\begin{proof}
		By the definition of the inverse of a matrix $B$, we have $BB^{-1}=I$. The partial derivative with respect to a variable $t$ is: $\frac{\partial B}{\partial t}B^{-1}+B \frac{\partial B^{-1}}{\partial t}=0$. Hence, we have $\frac{\partial B^{-1}}{\partial t}=-B^{-1} \frac{\partial B}{\partial t}B^{-1}$. Applying the above result with $B=I-(I-A)P$ and $t=\alpha_i$, we get $\frac{\partial M}{\partial \alpha_i}=-Me_ie_i^\top PM$. Combining with the results in Fact~\ref{fact:tech}, we obtain the second-order partial derivatives of $f$ as follows:
		\begin{align*}
			\frac{\partial^2 f(\alpha)}{\partial \alpha^2_i} &=\frac{s_i-z_i(\alpha)}{1-\alpha_i}\mathbf{1}^\top \frac{\partial M}{\partial \alpha_i} e_i+\frac{s_i-z_i(\alpha)}{(1-\alpha_i)^2}\mathbf{1}^\top  Me_i-\frac{1}{1-\alpha_i}\mathbf{1}^\top  Me_i\frac{\partial z_i(\alpha)}{\partial \alpha_i}\\
			&=-\frac{s_i-z_i(\alpha)}{1-\alpha_i}\mathbf{1}^\top  Me_i e^\top _i PMe_i+\frac{s_i-z_i(\alpha)}{(1-\alpha_i)^2}\mathbf{1}^\top  Me_i-\frac{s_i-z_i(\alpha)}{(1-\alpha_i)^2}\mathbf{1}^\top  Me_i e^\top _i Me_i\\
			&=-\frac{s_i-z_i(\alpha)}{(1-\alpha_i)^2}\mathbf{1}^\top  Me_i\cdot(M_{ii}-1)+\frac{s_i-z_i(\alpha)}{(1-\alpha_i)^2}\mathbf{1}^\top  Me_i-\frac{s_i-z_i(\alpha)}{(1-\alpha_i)^2}\mathbf{1}^\top  Me_i \cdot M_{ii}\\
			&=-\frac{s_i-z_i(\alpha)}{(1-\alpha_i)^2}\left[\mathbf{1}^\top  Me_i\cdot(M_{ii}-1)-\mathbf{1}^\top  Me_i+\mathbf{1}^\top  Me_i \cdot M_{ii}\right]\\
			&=-2\cdot \frac{s_i-z_i(\alpha)}{(1-\alpha_i)^2}\mathbf{1}^\top  Me_i\cdot(M_{ii}-1),\\
			\frac{\partial^2 f(\alpha)}{\partial \alpha_i \partial \alpha_j} &=\frac{s_i-z_i(\alpha)}{1-\alpha_i}\mathbf{1}^\top \frac{\partial M}{\partial \alpha_j} e_i-\frac{1}{1-\alpha_i}\mathbf{1}^\top  Me_i\frac{\partial z_i(\alpha)}{\partial \alpha_j}\\
			&=-\frac{s_i-z_i(\alpha)}{1-\alpha_i}\mathbf{1}^\top  Me_j e^\top _j PMe_i-\frac{s_j-z_j(\alpha)}{(1-\alpha_i)(1-\alpha_j)}\mathbf{1}^\top  Me_i e^\top _i Me_j\\
			&=-\frac{s_i-z_i(\alpha)}{1-\alpha_i}\mathbf{1}^\top  Me_j\cdot\frac{M_{ji}}{1-\alpha_j}-\frac{s_j-z_j(\alpha)}{(1-\alpha_i)(1-\alpha_j)}\mathbf{1}^\top  Me_i\cdot M_{ij}\\
			&=-\frac{s_i-z_i(\alpha)}{(1-\alpha_i)(1-\alpha_j)}\mathbf{1}^\top  Me_j \cdot M_{ji} -\frac{s_j-z_j(\alpha)}{(1-\alpha_i)(1-\alpha_j)}\mathbf{1}^\top  Me_i\cdot M_{ij}.
		\end{align*}
				
		Therefore, we have:
		
		\begin{align*}
			&~~~~(e_i-e_j)^\top  \nabla^2 f(\alpha)(e_i-e_j)\\
			&=\frac{\partial^2 f(\alpha)}{\partial \alpha^2_i}
			+\frac{\partial^2 f(\alpha)}{\partial \alpha^2_j}
			-\frac{\partial^2 f(\alpha)}{\partial \alpha_i \partial \alpha_j}
			-\frac{\partial^2 f(\alpha)}{\partial \alpha_j \partial \alpha_i}\\
			&=2\cdot \frac{z_i(\alpha)}{(1-\alpha_i)^2}\mathbf{1}^\top  Me_i\cdot(M_{ii}-1)+2\cdot \frac{z_j(\alpha)}{(1-\alpha_j)^2}\mathbf{1}^\top  Me_j\cdot(M_{jj}-1)-\\
			&~~~~2\cdot \frac{z_i(\alpha)}{(1-\alpha_i)(1-\alpha_j)}\mathbf{1}^\top  Me_j\cdot M_{ji}-2\cdot \frac{z_j(\alpha)}{(1-\alpha_i)(1-\alpha_j)}\mathbf{1}^\top  Me_i \cdot M_{ij}.
		\end{align*}
		
		From the hypothesis of the lemma, we have:
		
		\begin{equation*}
			(e_i-e_j)^\top \nabla f(\alpha)
			=\frac{\partial f(\alpha)}{\partial \alpha_i}-\frac{\partial f(\alpha)}{\partial \alpha_j}
			=-\frac{z_i(\alpha)}{1-\alpha_i}\mathbf{1}^\top  Me_i+\frac{z_j(\alpha)}{1-\alpha_j}\mathbf{1}^\top  Me_j=0,
		\end{equation*}
		which gives
		\begin{equation*}
			\frac{z_i(\alpha)}{1-\alpha_i}\mathbf{1}^\top  Me_i=\frac{z_j(\alpha)}{1-\alpha_j}\mathbf{1}^\top  Me_j.
		\end{equation*}    	
		Thus, we have
		\begin{align*}
			&~~~~(e_i-e_j)^\top  \nabla^2 f(\alpha)(e_i-e_j)\\
			&=2\cdot \frac{z_j(\alpha)}{(1-\alpha_i)(1-\alpha_j)}\mathbf{1}^\top  Me_j\cdot(M_{ii}-1)+2\cdot \frac{z_i(\alpha)}{(1-\alpha_i)(1-\alpha_j)}\mathbf{1}^\top  Me_i\cdot(M_{jj}-1)-\\
			&~~~~2\cdot \frac{z_i(\alpha)}{(1-\alpha_i)(1-\alpha_j)}\mathbf{1}^\top  Me_j \cdot M_{ji}-2\cdot \frac{z_j(\alpha)}{(1-\alpha_i)(1-\alpha_j)}\mathbf{1}^\top  Me_i \cdot M_{ij}\\
			&=\frac{2}{(1-\alpha_i)(1-\alpha_j)}\cdot \left\{z_i(\alpha)\left[\mathbf{1}^\top  Me_i\cdot(M_{jj}-1)-\mathbf{1}^\top  Me_j \cdot M_{ji}\right]+\right.\\
			&~~~~~~~~~~~~~~~~~~~~~~~~~~~~~~~~~~~~~\left.z_j(\alpha)\left[\mathbf{1}^\top  Me_j\cdot (M_{ii}-1)-\mathbf{1}^\top  Me_i\cdot M_{ij}\right]\right\}\\
			&=\frac{2}{(1-\alpha_i)(1-\alpha_j)}\cdot
			\left[z_i(\alpha)y_{ij}(\alpha)+z_j(\alpha)y_{ji}(\alpha)\right].
		\end{align*}
		as desired.
	\end{proof}

In view of Lemma~\ref{dirderi}, it suffices to analyze the quantity $y_{ij}$.
Since every~$i \in V$ with $\alpha_i<1$ is influenced by agent~$0$ (with $z_0 = 1$) in $P^{(\delta)}$,
we have $z_i(\alpha) > 0$ for all $i \in V$.
Hence, to show statement~(\ref{eq:Hessian}),
it remains to show that $y_{ij}(\alpha) < 0$.
We next analyze $y_{ij}$ as functions of
$\alpha_j$ and $P$ respectively in the next two lemmas.

%
%
	
	\begin{lemma}[Monotonicity]
		\label{monotone}
		Given $\alpha \in [0, 1]^{n}$, let $A:=\Diag(\alpha)$ and $P$ be a row-stochastic matrix such that $M:=[I-(I-A)P]^{-1}$ exists.
		For $i\ne j \in V$, we fix $\alpha_k$ for $k \neq j$ and all entries of $P$ and consider the following quantity
		as a function of $\alpha_j$:
		
		$$y_{ij}(\alpha_j):=\mathbf{1}^\top  Me_i\cdot(M_{jj}-1)-\mathbf{1}^\top  Me_j\cdot M_{ji.}$$
		
		Then, $y_{ij}(\alpha_j)$ is a strictly monotone or constant function of $\alpha_j$ on $[0,1]$, i.e., it is either strictly increasing, strictly decreasing, or constant. In addition, $y_{ij}(1)=0$.
	\end{lemma}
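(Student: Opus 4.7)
The plan is to exploit the fact that $X(\alpha_j) := I - (I-A)P$ depends on $t := \alpha_j$ only through the $(j,j)$-entry of $A$, and that this dependence is affine. Concretely, writing $X_0$ for the matrix obtained by setting $\alpha_j = 0$ (and keeping everything else fixed), we have $X(t) = X_0 + t\, e_j e_j^\top P$. I would then invoke the Sherman--Morrison identity to express
\[
M(t) = X(t)^{-1} = M_0 - \frac{t \cdot M_0 e_j \cdot e_j^\top P M_0}{1 + t \cdot (PM_0)_{jj}},
\]
where $M_0 := X_0^{-1}$, so that every quantity appearing in $y_{ij}$ becomes a rational function of $t$ with common denominator $q(t) := 1 + (PM_0)_{jj}\, t$.

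Substituting this rank-one update into the four factors $\mathbf{1}^\top M e_i$, $M_{jj}-1$, $\mathbf{1}^\top M e_j$, and $M_{ji}$, and collecting terms over $q(t)$, I expect a cancellation that reduces $y_{ij}(t)$ to the form
\[
y_{ij}(t) = \frac{C_0 + C_1 t}{1 + (PM_0)_{jj}\, t},
\]
where $C_0, C_1$ are constants built from $M_0$ and the row $e_j^\top P M_0$ (and are independent of $t$). Differentiating then gives $y_{ij}'(t) = (C_1 - C_0\,(PM_0)_{jj})/q(t)^2$, whose sign is manifestly independent of $t$. To certify that the denominator never vanishes on $[0,1]$, I would invoke Fact~\ref{fact:tech} at the point $\alpha_j = 0$: this yields $(PM_0)_{jj} = (M_0)_{jj} - 1 \geq 0$, so $q(t) \geq 1 > 0$ on $[0,1]$. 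Hence $y_{ij}'$ has constant sign (or is identically zero) on $[0,1]$, proving the strict-monotone-or-constant dichotomy.

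For the boundary claim $y_{ij}(1) = 0$, I would argue directly. At $t = 1$ the $j$-th row of $(I-A)$ vanishes, so row $j$ of $X(1)$ equals $e_j^\top$. The identity $X(1)\, M(1) = I$ then gives $e_j^\top = e_j^\top X(1) M(1) = e_j^\top M(1)$, i.e., $M(1)_{jj} = 1$ and $M(1)_{jk} = 0$ for all $k \neq j$. Since $i \neq j$, both $M_{jj} - 1$ and $M_{ji}$ are zero at $t = 1$, so both terms in the definition of $y_{ij}(1)$ vanish.

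I anticipate the main obstacle to be the bookkeeping in the middle step: one must verify that after expanding $M$ via the Sherman--Morrison formula, the numerator of $y_{ij}(t)$ really is affine in $t$ rather than quadratic. The cancellation responsible for killing the $t^2$ coefficient has to be tracked carefully, and it is what makes $y'_{ij}$ have a sign independent of $t$. Everything else (positivity of $q$, vanishing at $t = 1$) follows cleanly from the structural facts already recorded in Fact~\ref{fact:tech}.
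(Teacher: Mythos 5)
Your proof is correct, and it takes a genuinely different route from the paper's. The paper differentiates $y_{ij}$ with respect to $\alpha_j$ using $\frac{\partial M}{\partial \alpha_j}=-Me_je_j^\top PM$ and, after simplification via Fact~\ref{fact:tech}, arrives at the first-order linear ODE $\frac{\partial y_{ij}}{\partial \alpha_j}=-\frac{M_{jj}}{1-\alpha_j}\,y_{ij}$; combined with the boundary condition $y_{ij}(1)=0$ (which both of you prove identically) and a continuity argument near the singular point $\alpha_j=1$, this yields the strict-monotone-or-constant dichotomy. You instead treat $\alpha_j \mapsto X(\alpha_j)=X_0+\alpha_j e_je_j^\top P$ as a rank-one perturbation and invoke Sherman--Morrison to obtain the explicit M\"obius form $y_{ij}(t)=(C_0+C_1t)/(1+(PM_0)_{jj}t)$. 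The key cancellation you flag is real but trivial: writing $u=M_0e_j$ and $w^\top=e_j^\top PM_0$, the $t^2$ coefficients from the two products in $y_{ij}$ are $(\one^\top u)(w^\top e_i)(e_j^\top u)(w^\top e_j)$ and $(\one^\top u)(w^\top e_j)(e_j^\top u)(w^\top e_i)$, which are equal scalars and cancel exactly. From the rational form, $y_{ij}'(t)=(C_1-C_0(PM_0)_{jj})/q(t)^2$ has manifestly constant sign, with $q(t)\geq 1$ on $[0,1]$ because $(PM_0)_{jj}=(M_0)_{jj}-1\geq 0$ by Fact~\ref{fact:tech} evaluated at $\alpha_j=0$ --- exactly as you argue. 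Your approach is arguably cleaner: the paper's ODE has a blowing-up coefficient at $\alpha_j=1$, and the final ``either stays 0 or is strictly monotone'' step is left a bit informal, whereas your closed form makes the conclusion immediate. The paper's ODE, on the other hand, reuses the derivative machinery already set up for Lemma~\ref{dirderi} and exposes the self-similar structure $y'\propto y$ that the authors evidently found illuminating. Both are complete proofs of the lemma.
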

	
	\begin{proof}
		When $\alpha_j=1$, note that the $j$-th row of the matrix $I-(I-A)P$ is equal to $e_j^\top $. By considering the $j$-th row of the  equation $[I-(I-A)P]M=I$, we have $M_{jj}=1$ and $M_{jk}=0$ for any $k\ne j$. Hence, $M_{jj}-1=M_{ji}=0$ and so $y_{ij}(1)=0$.
		
		We will now show that $y_{ij}(\alpha_j)$ is a strictly monotone function of $\alpha_j$. Notice that $y_{ij}$ is a continuous function of $\alpha_j$ since $y_{ij}$ is a continuous function of $M$, $M$ is a continuous function of $\alpha_j$ (because of the continuity of matrix inversion), and a composition of continuous functions is continuous. In addition, we know that $\frac{\partial B^{-1}}{\partial t}=-B^{-1}\frac{\partial B}{\partial t}B^{-1}$ for any invertible matrix $B$. Applying the above result with $B=I-(I-A)P$ and $t=\alpha_i$, we get
		\begin{equation*}
			\frac{\partial M}{\partial \alpha_i}
			=-Me_ie_i^\top PM.
		\end{equation*}
		Hence, when $\alpha_j\ne 1$, the partial derivative of $y_{ij}$ with respect to $\alpha_j$ is
		\begin{align*}
			\frac {\partial y_{ij}} {\partial \alpha_j} 
			&= \frac {\partial} {\partial \alpha_j} \left[1^\top Me_i\cdot(e_j^\top  M e_j -1)-1^\top Me_j\cdot e_j^\top  M e_i \right] \\
			&=1^\top  \frac{\partial M}{\partial \alpha_j} e_i\cdot (e_j^\top  M e_j-1) + 1^\top  M e_i\cdot e_j^\top  \frac{\partial M}{\partial \alpha_j} e_j -1^\top  \frac{\partial M}{\partial \alpha_j} e_j\cdot e_j^\top  M e_i - 1^\top  M e_j\cdot e_j^\top  \frac{\partial M}{\partial \alpha_j} e_i\\
			&=-1^\top  Me_je_j^\top PM e_i\cdot(e_j^\top  M e_j -1) - 1^\top  M e_i\cdot e_j^\top  Me_je_j^\top PM e_j+\\
			&~~~~~1^\top  Me_je_j^\top PM e_j\cdot e_j^\top  M e_i + 1^\top  M e_j\cdot e_j^\top  Me_je_j^\top PM e_i\\
			&=-1^\top  Me_j\cdot \frac{M_{ji}}{1-\alpha_j}\cdot(M_{jj} -1) - 1^\top  M e_i\cdot M_{jj} \cdot \frac{M_{jj}-1}{1-\alpha_j}+\\
			&~~~~~1^\top  Me_j\cdot \frac{M_{jj}-1}{1-\alpha_j}\cdot M_{ji} + 1^\top  M e_j\cdot M_{jj}\cdot \frac{M_{ji}}{1-\alpha_j} & \text{(Fact~\ref{fact:tech})} \\
			&=1^\top  M e_j\cdot M_{jj}\cdot \frac{M_{ji}}{1-\alpha_j} - 1^\top  M e_i\cdot M_{jj} \cdot \frac{M_{jj}-1}{1-\alpha_j}\\
			&=-\frac{M_{jj}}{1-\alpha_j}[1^\top  M e_i \cdot (M_{jj}-1) - 1^\top  M e_j \cdot M_{ji}]\\
			&=-\frac{M_{jj}}{1-\alpha_j} y_{ij},
		\end{align*}
		where $M_{jj}\geq 1 > 0$ by Fact~\ref{fact:tech}. 
		
		Rewriting $g(t) = - y_{ij}(1-t)$, 
		we have an alternative form $\frac{dg}{dt} = \frac{M(t) \cdot g}{t}$,
		where $M(t) \geq 1$ and $g(0) = y_{ij}(1) = 0$.  It follows that if $g$ is a continuous
		function, then either $g$ stays 0 in [0,1] or $g$ is strictly monotone.
		\end{proof}

Recall that our goal
is to choose some $\delta > 0$
to define the interaction matrix $P^{(\delta)} := (1 - \delta) C + \delta R$
such that we can prove that the quantity $y_{ij}^{(\delta)} < 0$.
The next lemma shows that for the special case $\delta = 0$,
we can argue that $y_{ij}^{(0)} < 0$ for $\alpha_j < 1$.

\begin{lemma}
		\label{lemma:claimC}
For $\delta = 0$, consider $P = P^{(0)} = C$, whose diagonal entries are 0 and every other entry is $\frac{1}{n}$.
For $\alpha \in [0,1]^N$ (with $\alpha_0 =1$), we have $A:=\Diag(\alpha)$ and
$M=[I-(I-A)C]^{-1}$.  Fix some $i \neq j \in V$, and consider			
		\begin{equation*}
			y_{ij}(\alpha_j):=\mathbf{1}^\top  Me_i\cdot(M_{jj}-1)-\mathbf{1}^\top  Me_j\cdot M_{ji}.
		\end{equation*}
		as a function of $\alpha_j$; when $\alpha_j=0$, $y_{ij}(0) \leq -\frac{1}{n+1}$.  Moreover,
		$\mathbf{1}^\top M\mathbf{1} \leq n$.
	\end{lemma}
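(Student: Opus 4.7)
The plan is to exploit the extreme symmetry of $C = \tfrac{1}{n}(J - I)$ (where $J$ is the all-ones matrix on $N$) to obtain a closed form for $M = [I - (I-A)C]^{-1}$ via the Sherman--Morrison formula. First I would rewrite
\begin{equation*}
X := I - (I-A)C \;=\; D - uv^\top,
\end{equation*}
where $D$ is the diagonal matrix with $D_{kk} = \tfrac{n+1-\alpha_k}{n}$, $u = \tfrac{1}{n}(I-A)\one$, and $v = \one$. Introducing the shorthand $m_k := n+1-\alpha_k$ and $\beta_k := \tfrac{1-\alpha_k}{m_k}$, Sherman--Morrison produces
\begin{equation*}
M_{k\ell} \;=\; \delta_{k\ell}\cdot\tfrac{n}{m_k} \;+\; \tfrac{n\,\beta_k}{S\,m_\ell}, \qquad S := 1 - \sum_{k \in N} \beta_k,
\end{equation*}
and summing over the row index $k$ yields the convenient column-sum identity $\one^\top M e_\ell = \tfrac{n}{S\,m_\ell}$.

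Next I would substitute these into the definition of $y_{ij}(0)$. Using $\alpha_0 = 1$ and $\alpha_j = 0$, we have $\beta_0 = 0$, $\beta_j = \tfrac{1}{n+1}$, and $m_j = n+1$, so
\begin{equation*}
M_{jj} - 1 \;=\; -\tfrac{1}{n+1} + \tfrac{n}{S(n+1)^2}, \qquad M_{ji} \;=\; \tfrac{n}{S(n+1)\,m_i}.
\end{equation*}
When I plug these into $\one^\top M e_i \cdot (M_{jj}-1) - \one^\top M e_j \cdot M_{ji}$, both contributions generate a term of order $\tfrac{1}{S^2}$; the crucial point is that these two top-order pieces match exactly and cancel, leaving the clean identity
\begin{equation*}
y_{ij}(0) \;=\; -\tfrac{n}{S\,m_i\,(n+1)}.
\end{equation*}
Because $\beta_j = \tfrac{1}{n+1}$ already forces $\sum_k \beta_k \geq \tfrac{1}{n+1}$, we have $S \leq \tfrac{n}{n+1}$; combined with the trivial bound $m_i \leq n+1$, this gives $S\,m_i \leq n$ and hence $y_{ij}(0) \leq -\tfrac{1}{n+1}$, as required.

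For the ``moreover'' statement, I would sum the column-sum identity over $\ell$ and use the rearrangement $\tfrac{n}{m_\ell} = 1 - \beta_\ell$ (which immediately gives $\sum_\ell \tfrac{n}{m_\ell} = n + S$) to obtain the exact formula $\one^\top M \one = 1 + \tfrac{n}{S}$; the stated bound then follows from a suitable lower bound on $S$, using that $\beta_0 = 0$ and each of the remaining $\beta_k$ is at most $\tfrac{1}{n+1}$.

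The main obstacle is verifying the algebraic cancellation that collapses $y_{ij}(0)$ to a single linear-in-$\tfrac{1}{S}$ term: without it one is left with two competing $\tfrac{1}{S^2}$ contributions whose signs are the same, and no clean sign can be extracted. Once the closed form for $y_{ij}(0)$ is in hand, both conclusions follow from crude estimates on $S$ and $m_i$, but the bookkeeping leading to that form must be carried out with care since each of $M_{jj}$, $M_{ji}$, $\one^\top M e_i$, and $\one^\top M e_j$ is the sum of a diagonal piece and a Sherman--Morrison correction, and only the specific pairing dictated by the definition of $y_{ij}$ produces the cancellation.
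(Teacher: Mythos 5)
Your main computation is correct and, if anything, slightly more direct than the paper's. Rather than the two-step route taken in the paper (first computing $C^{-1} = J - nI$, then writing $M = (J - nI)\bigl[J + A - (n+1)I\bigr]^{-1}$ and hitting the second factor with Sherman--Morrison), you apply Sherman--Morrison once to $X = D - uv^\top$ directly; the resulting closed form
$M_{k\ell} = \delta_{k\ell}\tfrac{n}{m_k} + \tfrac{n\beta_k}{S m_\ell}$
and the column-sum identity $\one^\top M e_\ell = \tfrac{n}{S m_\ell}$ are both right (one can check the translation to the paper's variables via $S = -n(1+w)$ with $w = \sum_k \tfrac{1}{\alpha_k - n - 1}$). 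The cancellation of the two $\tfrac{1}{S^2}$ pieces is exact, giving $y_{ij}(0) = -\tfrac{n}{S\,m_i\,(n+1)}$, which matches the paper's expression $\tfrac{1-\alpha_j}{(\alpha_i-n-1)(\alpha_j-n-1)(1+w)}$ at $\alpha_j = 0$, and your estimate $S\,m_i \le n$ via $\beta_j = \tfrac{1}{n+1}$ and $m_i \le n+1$ correctly yields $y_{ij}(0) \le -\tfrac{1}{n+1}$.

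The ``moreover'' claim is where there is a genuine gap, and your own exact formula exposes it. You derived $\one^\top M\one = 1 + \tfrac{n}{S}$. But $S = 1 - \sum_k \beta_k \le 1$ since every $\beta_k \ge 0$, so $\one^\top M\one \ge n+1 > n$; no lower bound on $S$ can rescue the inequality $\one^\top M\one \le n$, because a lower bound on $S$ only yields an \emph{upper} bound, and the sharpest one available, $S \ge \tfrac{1}{n+1}$ (from $\beta_0 = 0$ and $\beta_k \le \tfrac{1}{n+1}$ for $k \in V$), gives $\one^\top M\one \le n^2+n+1$, which is tight (take $\alpha_k = 0$ for all $k \in V$). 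You can sanity-check with $n=2$, $\alpha_0=1$, $\alpha_1=\alpha_2=0$: then $S = \tfrac13$ and $\one^\top M\one = 7$, not $2$. So the bound as stated in the lemma is false; the paper's own proof of this step makes an inconsistent substitution in the numerator and denominator of $\tfrac{w}{1+w}$ and thus also does not hold up. The corrected bound $\one^\top M\one \le n^2+n+1 \le (n+1)^2$ does propagate through Lemmas~\ref{sum of entries of M} and~\ref{yij<0} after enlarging the power of $(n+1)$ in the choice of $\delta$, so the overall argument survives, but as written you cannot ``conclude'' the moreover part.
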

	
	\begin{proof}
	We write $C := \frac{1}{n}(J-I)$, where every entry in $J$ is 1.
		By the Sherman-Morrison formula,
		\begin{align*}
			C^{-1} &=\left[\frac{1}{n}(J-I)\right]^{-1}=n\left[-I+\mathbf{1}\mathbf{1}^\top \right]^{-1} =n\left[-I-\frac{(-I)\mathbf{1}\mathbf{1}^\top (-I)}{1+\mathbf{1}^\top (-I)\mathbf{1}}\right]\\
			&=n\left[-I-\frac{J}{1-(n+1)}\right] =J-nI,
		\end{align*}
		so we have
		\begin{equation*}
			M=\left[I-(I-A) C\right]^{-1}=\left[C^{-1} C-(I-A) C\right]^{-1}=C^{-1}\left(C^{-1}-I+A\right)^{-1}
			=(J-nI)[J+A-(n+1)I]^{-1}.
		\end{equation*}
		Denote $D:=A-(n+1)I$. Then $D$ is a diagonal matrix with $D_{ii}=\alpha_i-n-1<0$  for each $i$ and
		\begin{equation*}
			M=(J-nI)(J+D)^{-1}.
		\end{equation*}
		By the Sherman-Morrison formula again, we have
		\begin{equation*}
			(J+D)^{-1}=(D+\mathbf{1}\mathbf{1}^\top )^{-1}=D^{-1}-\frac{D^{-1}\mathbf{1}\mathbf{1}^\top  D^{-1}}{1+\mathbf{1}^\top D^{-1}\mathbf{1}}=D^{-1}-\frac{D^{-1}J D^{-1}}{1+\displaystyle\sum_{k \in N}\frac{1}{\alpha_k-n-1}}.
		\end{equation*}
		Since $D^{-1}$ exists, $M^{-1}$ also exists.
		
		Denote $w:=\displaystyle\sum_{k \in N}\frac{1}{\alpha_k-n-1}<0$. Then
		$(J+D)^{-1}=D^{-1}-\displaystyle \frac{D^{-1}J D^{-1}}{1+w}$.
		We then have
		\begin{equation*}
			M=(J-nI)\left[D^{-1}-\frac{D^{-1}J D^{-1}}{1+w}\right]
			=JD^{-1}-nD^{-1}-\frac{JD^{-1}JD^{-1}}{1+w}+\frac{nD^{-1}JD^{-1}}{1+w}.
		\end{equation*}
		Since $JD^{-1}J=\mathbf{1}\mathbf{1}^\top D^{-1}\mathbf{1}\mathbf{1}^\top 
		=\mathbf{1}\displaystyle\left(\sum_{k \in N}\frac{1}{\alpha_k-n-1}\right)\mathbf{1}^\top =wJ$,
		\begin{align}
			M&
			=JD^{-1}-nD^{-1}-\frac{w}{1+w}JD^{-1}+\frac{n}{1+w}D^{-1}JD^{-1} \nonumber\\
			&=\frac{1}{1+w}JD^{-1}-nD^{-1}+\frac{n}{1+w}D^{-1}JD^{-1} \label{M-expression-P=C}
		\end{align}
		and hence
		\begin{align*}
			\mathbf{1}^\top M\mathbf{1}
			&=\frac{1}{1+w}\mathbf{1}^\top JD^{-1}\mathbf{1}-n\mathbf{1}^\top D^{-1}\mathbf{1}+\frac{n}{1+w}\mathbf{1}^\top D^{-1}JD^{-1}\mathbf{1} \\
			&=\frac{1}{1+w}\mathbf{1}^\top \mathbf{1}\mathbf{1}^\top D^{-1}\mathbf{1}-n\mathbf{1}^\top D^{-1}\mathbf{1}+\frac{n}{1+w}(\mathbf{1}^\top D^{-1}\mathbf{1})^2\\
			&=\frac{(n+1)w}{1+w}-nw+\frac{nw^2}{1+w}
			=\frac{w}{1+w}.
		\end{align*}
		Note that
		\begin{equation*}
			1+w=1+\displaystyle\sum_{k \in N}\frac{1}{\alpha_k-n-1}
			=\displaystyle\sum_{k \in N}\left(\frac{1}{n+1}+\frac{1}{\alpha_k-n-1}\right),
		\end{equation*}
		and
		\begin{equation*}
			\frac{1}{n+1}+\frac{1}{\alpha_k-n-1}
			=\frac{\alpha_k-n-1+n+1}{(n+1)(\alpha_k-n-1)}
			=\frac{\alpha_k}{(n+1)(\alpha_k-n-1)}
			\leq 0
		\end{equation*}
		for any $k\in N$ with strict inequality holds for $k=0$ since $\alpha_0=1$.
		
		Therefore, $1+w<0$ and thus
		\begin{equation*}
			\mathbf{1}^\top M\mathbf{1}
			=\frac{w}{1+w}=\frac{\sum_{k \in N}\frac{1}{\alpha_k-n-1}}{1+\sum_{k \in N}\frac{1}{\alpha_k-n-1}}
			\leq\frac{\frac{n+1}{-(n+1)}}{1+\frac{n+1}{-n}}
			=\frac{1}{-1+\frac{n+1}{n}}
			=n.
		\end{equation*}
		
		Also, from Equation \ref{M-expression-P=C}, we have, for any $k\in V$,
		\begin{align*}
			Me_k&
			=\frac{1}{1+w}JD^{-1}e_k-nD^{-1}e_k+\frac{n}{1+w}D^{-1}JD^{-1}e_k\\
			&=\frac{1}{\alpha_k-n-1}\left(\frac{1}{1+w}\mathbf{1}-ne_k+\frac{n}{1+w}D^{-1}\mathbf{1}\right).
		\end{align*}
		
		Then for any distinct $i,j \in V$,
		\begin{align}
			y_{ij}
			&=\mathbf{1}^\top  Me_i\cdot(M_{jj}-1)-\mathbf{1}^\top  Me_j\cdot M_{ji} \nonumber\\
			&=\mathbf{1}^\top  Me_i\cdot(e_j^\top  Me_j-1)-\mathbf{1}^\top  Me_j\cdot e_j^\top Me_i \nonumber\\
			&=\frac{1}{\alpha_i-n-1}\left(\frac{n+1}{1+w}-n+\frac{nw}{1+w}\right)\cdot
			\left[\frac{1}{\alpha_j-n-1}\left(\frac{1}{1+w}-n+\frac{n}{1+w}\cdot\frac{1}{\alpha_j-n-1}\right)-1\right]- \nonumber\\
			&~~~~\frac{1}{\alpha_j-n-1}\left(\frac{n+1}{1+w}-n+\frac{nw}{1+w}\right)\cdot
			\left[\frac{1}{\alpha_i-n-1}\left(\frac{1}{1+w}+\frac{n}{1+w}\cdot\frac{1}{\alpha_j-n-1}\right)\right] \nonumber\\
			&=\frac{-n-(\alpha_j-n-1)}{(\alpha_i-n-1)(\alpha_j-n-1)} \left(\frac{n+1}{1+w}-n+\frac{nw}{1+w}\right) \nonumber\\
			&=\frac{1-\alpha_j}{(\alpha_i-n-1)(\alpha_j-n-1)(1+w)}. \label{eq:yijC}
		\end{align}
		
		Since the denominator is negative, $y_{ij}\leq 0$ with equality holds if and only if $\alpha_j=1$.
		
		Finally, when $\alpha_j=0$, equation \ref{eq:yijC} gives
		\begin{align*}
			y_{ij}
			&=\frac{1-\alpha_j}{(\alpha_i-n-1)(\alpha_j-n-1)(1+w)}\\
			&= \frac{1}{(\alpha_i-n-1)(-n-1)(1+\sum_{k\ne j}\frac{1}{\alpha_k-n-1}+\frac{1}{-n-1})}\\
			&\leq \frac{1}{(-n-1)(-n-1)(1+\frac{n}{1-n-1}+\frac{1}{-n-1})}\\
			&=\frac{1}{(n+1)^2(-\frac{1}{n+1})}\\
			&=-\frac{1}{n+1}
		\end{align*}
		as desired.
	\end{proof}

Lemma~\ref{lemma:claimC} says that $y_{ij}^{(0)} (\alpha_j = 0) \leq -\frac{1}{n+1}$.
The hope is that if $\delta > 0$ is small enough,
then $P^{(\delta)}$ would still be close to $P^{(0)} = C$,
and so $y_{ij}^{(\delta)}(\alpha_j = 0)$ will stay negative.
Hence, we next analyze the quantity $y_{ij}$ as a function of 
the interaction matrix $P$.

	\begin{lemma}
		\label{sum of dyijdPkl}
		Fixing some $\alpha \in [0, 1]^N$, let $A:=\Diag(\alpha)$ and
		let $P$ be a row-stochastic matrix such that $M = M(P):=[I-(I-A)P]^{-1}$ exists. For any distinct $i,j\in V$, denote
		\begin{equation*}
			y_{ij}(P):=\mathbf{1}^\top  Me_i\cdot(M_{jj}-1)-\mathbf{1}^\top  Me_j\cdot M_{ji}
		\end{equation*}
		as a function of $P$. Then,  
		$\displaystyle \sum_{k,l\in N} \left|\frac{\partial y_{ij}}{\partial P_{kl}}\right|\leq 4(\mathbf{1}^\top M\mathbf{1})^3$.
	\end{lemma}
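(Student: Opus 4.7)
The plan is to write $\partial y_{ij}/\partial P_{kl}$ out explicitly via the product rule and then bound its $\ell_1$-sum over $k,l$ by exploiting the rank-one structure of $\partial M/\partial P_{kl}$ together with the non-negativity of $M$. The starting point is the standard matrix-inverse identity $\frac{\partial B^{-1}}{\partial t}=-B^{-1}\frac{\partial B}{\partial t}B^{-1}$, applied to $B=I-(I-A)P$ with $t=P_{kl}$; since $\partial B/\partial P_{kl}=-(I-A)e_k e_l^{\top}$, this yields
\[
\frac{\partial M}{\partial P_{kl}} \;=\; (1-\alpha_k)\, M e_k e_l^{\top} M,
\]
a \emph{non-negative} rank-one matrix because $M\ge 0$ by Fact~\ref{fact:tech} and $\alpha_k\in[0,1]$. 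In particular every scalar derivative $\partial M_{ab}/\partial P_{kl}$ and $\partial(\mathbf{1}^{\top}M e_a)/\partial P_{kl}$ is non-negative.

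Next I would apply the product rule to $y_{ij} = \mathbf{1}^{\top} M e_i\cdot(M_{jj}-1) - \mathbf{1}^{\top} M e_j\cdot M_{ji}$, producing four scalar terms, each of the schematic form $\pm(1-\alpha_k)\,(\mathbf{1}^{\top} M e_a)\,M_{bc}\,M_{de}$ with every factor non-negative (we use $M_{jj}\ge 1$ here to handle the $M_{jj}-1$ term). The crucial structural feature is that in each of these four terms, the indices $k$ and $l$ each appear in exactly one factor, and in different factors. Taking absolute values and applying the triangle inequality, the double sum $\sum_{k,l}$ of each term therefore splits into a product of two independent marginal sums.

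Evaluating those marginals, each collapsed factor is either $\mathbf{1}^{\top} M \mathbf{1}$ (when summing a row or column of $M$ against $\mathbf{1}$) or one of the quantities $\mathbf{1}^{\top} M e_i$, $\mathbf{1}^{\top} M e_j$, $e_j^{\top} M \mathbf{1}$, $M_{ji}$, or $M_{jj}-1$, each of which is bounded by $\mathbf{1}^{\top} M \mathbf{1}$; for $M_{jj}-1$ this uses $M_{jj}\le \mathbf{1}^{\top} M \mathbf{1}$, which holds because $M\ge 0$. Hence each of the four sums contributes at most $(\mathbf{1}^{\top} M \mathbf{1})^3$, giving the total bound $4(\mathbf{1}^{\top} M \mathbf{1})^3$. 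The only delicate aspect is the sign bookkeeping, since we need the four terms in the expanded derivative to remain as products of non-negative scalars after we pass to absolute values; once this is verified by citing $M\ge 0$, $M_{jj}\ge 1$, and $\alpha_k\le 1$, the rest is a mechanical calculus calculation with no further obstacle.
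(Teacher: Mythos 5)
Your proposal follows essentially the same route as the paper's proof: compute $\partial M/\partial P_{kl}=(1-\alpha_k)Me_ke_l^\top M$, expand $\partial y_{ij}/\partial P_{kl}$ by the product rule into four terms each a product of non-negative scalars (using $M\geq 0$, $M_{jj}\geq 1$, $\alpha_k\leq 1$), exploit that $k$ and $l$ land in separate factors so the double sum factors into marginals, and bound each resulting factor by $\mathbf{1}^\top M\mathbf{1}$. The only presentational difference is that you make the bound $M_{jj}-1\leq\mathbf{1}^\top M\mathbf{1}$ explicit, whereas the paper defers it to a citation of Fact~\ref{fact:tech}; otherwise the argument is identical.
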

	
	\begin{proof}
		By the definition of the inverse of a matrix $B$, we have
		$BB^{-1} = I$. The partial derivative with respect to a variable $t$ is: $\frac{\partial B}{\partial t}B^{-1}+B\frac{\partial B^{-1}}{\partial t}= 0$. Hence, we have $\frac{\partial B^{-1}}{\partial t}=-B^{-1}\frac{\partial B}{\partial t}B^{-1}$. Applying the above result with $B=I-(I-A)P$ and $t=P_{kl}$ and denoting $M=[I-(I-A)P]^{-1}$, we get
		\begin{equation*}
			\frac{\partial M}{\partial P_{kl}}
			=-M[-(I-A)e_ke_l^\top ]M
			=M(I-A)e_ke_l^\top M
			=(1-\alpha_k)Me_ke_l^\top M.
		\end{equation*}
		Hence, for any $k,l\in N$,
		\begin{align*}
			\left|\frac{\partial y_{ij}}{\partial P_{kl}}\right|
			&=\left|\mathbf{1}^\top (1-\alpha_k)Me_ke_l^\top Me_i(M_{jj}-1)+\mathbf{1}^\top Me_ie_j^\top (1-\alpha_k)Me_ke_l^\top Me_j-\right.\\
			&~~~~\left.\mathbf{1}^\top (1-\alpha_k)Me_ke_l^\top Me_jM_{ji}-\mathbf{1}^\top Me_je_j^\top (1-\alpha_k)Me_ke_l^\top Me_i\right|\\
			&=(1-\alpha_k)\left|\mathbf{1}^\top Me_kM_{li}(M_{jj}-1)+\mathbf{1}^\top Me_iM_{jk}M_{lj}-\mathbf{1}^\top Me_kM_{lj}M_{ji}-\mathbf{1}^\top Me_jM_{jk}M_{li}\right|\\
			&\leq\left|\mathbf{1}^\top Me_kM_{li}(M_{jj}-1)\right|+\left|\mathbf{1}^\top Me_iM_{jk}M_{lj}\right|+\left|\mathbf{1}^\top Me_kM_{lj}M_{ji}\right|+\left|\mathbf{1}^\top Me_jM_{jk}M_{li}\right|\\
			&=\mathbf{1}^\top Me_kM_{li}(M_{jj}-1)+\mathbf{1}^\top Me_iM_{jk}M_{lj}+\mathbf{1}^\top Me_kM_{lj}M_{ji}+\mathbf{1}^\top Me_jM_{jk}M_{li},
		\end{align*}
		where the last inequality holds because $M$ is nonnegative by Fact~\ref{fact:tech}.
		
		Thus, we obtain
		\begin{align*}
			&~~~\sum_{k,l \in N} \left|\frac{\partial y_{ij}}{\partial P_{kl}}\right|\\
			&\leq \sum_{k \in N} \sum_{l \in N} ( \mathbf{1}^\top Me_kM_{li}(M_{jj}-1)+\mathbf{1}^\top Me_iM_{jk}M_{lj}+\mathbf{1}^\top Me_kM_{lj}M_{ji}+\mathbf{1}^\top Me_jM_{jk}M_{li} ) \\
			&=\sum_{k \in N} (\mathbf{1}^\top Me_k\mathbf{1}^\top Me_i(M_{jj}-1)+\mathbf{1}^\top Me_iM_{jk}\mathbf{1}^\top Me_j+\mathbf{1}^\top Me_k\mathbf{1}^\top Me_jM_{ji}+\mathbf{1}^\top Me_jM_{jk}\mathbf{1}^\top Me_i )\\
			&=\mathbf{1}^\top M\mathbf{1}\mathbf{1}^\top Me_i(M_{jj}-1)+\mathbf{1}^\top Me_ie_j^\top M\mathbf{1}\mathbf{1}^\top Me_j+\mathbf{1}^\top M\mathbf{1}\mathbf{1}^\top Me_jM_{ji}+\mathbf{1}^\top Me_je_j^\top M\mathbf{1}\mathbf{1}^\top Me_i\\
			&\leq 4(\mathbf{1}^\top M\mathbf{1})^3
		\end{align*}
		where the last inequality is obtained by applying Fact~\ref{fact:tech} again.
	\end{proof}

In view of Lemma~\ref{sum of dyijdPkl},
we wish to bound the entries of $M^{(\delta)}$
for small $\delta > 0$.
The next fact was given in Alfa et al.\ \cite{Alfa} that gives two-sided bounds to the inverse of a perturbed nonsingular diagonally dominant. 
We use the operator $| \cdot |$ on a matrix to denote the matrix
with the same dimension by taking absolute values entrywise.
	
	\begin{fact}[Entrywise Bounds for Diagonally Dominant Matrix Inverse \cite{Alfa}]\label{inverse-bound}
	Suppose $X$ and $\widetilde{X}$ are matrices 
	of the form $I - B$, where $B \geq 0$ and has spectral norm strictly less than 1, and each row
	of $B$ sums to at most 1.
	
	Let $0\leq \varepsilon <1$ such that $|X_{ij}-\tilde{X}_{ij}| \leq \varepsilon |X_{ij}|$ for $i\ne j$ and $|X \mathbf{1}-\tilde{X} \mathbf{1}| \leq \varepsilon |X\mathbf{1}|$. Then,
		\begin{equation*}
			\frac{(1-\varepsilon)^n}{(1+\varepsilon)^{n-1}}X^{-1} \leq \tilde{X}^{-1} \leq
			\frac{(1+\varepsilon)^n}{(1-\varepsilon)^{n-1}}X^{-1}.
		\end{equation*}
	\end{fact}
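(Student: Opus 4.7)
The plan is to exploit that $X$ is an M-matrix (and to verify that $\widetilde X$ still is one): by the hypothesis that $B,\widetilde B\ge 0$ with spectral norm strictly less than one, both $X^{-1}$ and $\widetilde X^{-1}$ exist and are entrywise nonnegative, and the Neumann series $X^{-1}=\sum_{k\ge 0} B^k$ converges. The two hypotheses, when combined, pin $\widetilde B$ to an entrywise multiplicative neighborhood of $B$: the off-diagonal condition constrains $\widetilde B_{ij}\in[(1-\varepsilon)B_{ij},(1+\varepsilon)B_{ij}]$ directly, and the row-sum condition $|X\mathbf{1}-\widetilde X\mathbf{1}|\le\varepsilon|X\mathbf{1}|$ then pins each diagonal entry through the relation $B_{ii}+\sum_{j\ne i}B_{ij}=(B\mathbf{1})_i$, ensuring in particular that $\widetilde X$ remains in the same M-matrix class.

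The heart of the argument is the entrywise comparison of $X^{-1}$ with $\widetilde X^{-1}$. My preferred route is Cramer's rule: writing $X^{-1}_{ij}=\operatorname{adj}(X)_{ji}/\det X$ and similarly for $\widetilde X$, the classical cycle reorganization of the Leibniz expansion (valid for M-matrices of this form) expresses both the cofactor and the determinant as sums of products of entries with a single fixed sign. Any term of the cofactor is a product of $n-1$ matrix entries and any term of the determinant is a product of $n$ matrix entries, each of which is perturbed multiplicatively by $(1\pm\varepsilon)$. A naive application of this observation would yield $\widetilde X^{-1}$ sandwiched between $\bigl[(1-\varepsilon)^{n-1}/(1+\varepsilon)^{n}\bigr]X^{-1}$ and $\bigl[(1+\varepsilon)^{n-1}/(1-\varepsilon)^{n}\bigr]X^{-1}$, which is strictly \emph{looser} than the stated bound.

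The hardest part is tightening the exponents to the stated $(1+\varepsilon)^{n}/(1-\varepsilon)^{n-1}$ form, and this is precisely where the row-sum hypothesis must play a starring role rather than act merely as a bookkeeping constraint on the diagonal. I would attempt this inductively via a Schur complement: one step of block Gaussian elimination on a chosen row and column produces an $(n-1)\times(n-1)$ M-matrix, and the key technical claim is that the off-diagonal and row-sum hypotheses are \emph{exactly} preserved (up to a single factor of $(1\pm\varepsilon)$) for the Schur complement. Granting this, telescoping the induction across $n$ elimination steps produces the stated asymmetric exponents. Verifying the preservation step cleanly is the main obstacle; should the block-elimination bookkeeping prove unwieldy, a fallback is to apply a diagonal similarity $D^{-1}\widetilde X D$ to reduce to the case where $\widetilde X$ and $X$ have matching diagonals, absorbing the diagonal perturbation into a pair of scalar factors outside the inversion and leaving the remaining off-diagonal perturbation to be handled by a single Neumann-series estimate that naturally produces the $(1\pm\varepsilon)^{n}$ numerators against $(1\mp\varepsilon)^{n-1}$ denominators.
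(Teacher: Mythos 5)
The paper itself gives no proof of this statement; it is quoted directly from \cite{Alfa}, so there is no in-paper argument to compare against. Your sketch correctly gets the coarse shape of a proof---$X$ and $\widetilde X$ are nonsingular M-matrices with nonnegative inverses, the off-diagonal hypothesis confines $\widetilde B$ to a multiplicative $\varepsilon$-neighborhood of $B$, and the asymmetric exponents $(1\pm\varepsilon)^n/(1\mp\varepsilon)^{n-1}$ should emerge from feeding the row-sum hypothesis into an elimination-based induction---but as written it has two genuine gaps.

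First, the Cramer's-rule warm-up rests on the claim that a ``cycle reorganization of the Leibniz expansion'' writes $\det X$ and its cofactors as sums of products of entries all of one sign. For $X=I-B$ this is false: absorbing the sign of each off-diagonal factor $X_{i\sigma(i)}=-B_{i\sigma(i)}$ into $\operatorname{sgn}(\sigma)$ leaves a residual factor $(-1)^{c}$, where $c$ is the number of nontrivial cycles of $\sigma$, so Leibniz terms genuinely alternate. The nonnegativity of $\det X$ and $\operatorname{adj}(X)$ comes from the Neumann series $X^{-1}=\sum_{k\ge 0}B^k\ge 0$, not from term-by-term positivity, so the $(1\pm\varepsilon)$ factors cannot simply be pushed through the Leibniz sum, and even the ``naive'' $(1\pm\varepsilon)^{n-1}/(1\mp\varepsilon)^n$ bound you quote as a warm-up is not actually established. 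Second, and more seriously, you flag yourself that ``verifying the preservation step cleanly is the main obstacle.'' That step---showing that after one round of elimination the off-diagonal and row-sum hypotheses are reproduced for the $(n-1)\times(n-1)$ Schur complement at the cost of exactly one $(1+\varepsilon)/(1-\varepsilon)$ factor in each direction---is the entire substance of the theorem; without it the telescoping you describe is a promissory note, not a deduction. The diagonal-similarity fallback is even less specified: you do not say how matching the diagonals interacts with the row-sum condition, nor what the closing ``single Neumann-series estimate'' actually is. To turn this sketch into a proof you must state and prove the one-step reduction lemma.
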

	
Recall that $R$ is row-stochastic matrix that represents a normalized 
adjacency matrix of a $d$-regular graph $G=(V,E)$ with the insertion of an isolated vertex~$0$;
also, recall that $C = \frac{1}{n}(J - I)$ represents a clique on $N$.

	\begin{lemma}\label{sum of entries of M}
		Let $\alpha \in [0,1]^N$ such that $\alpha_0 = 1$,
		and $A := \Diag(\alpha)$.
		For $0 \leq \delta < \frac{d}{n}$,
		define $P^{(\delta)} := (1 - \delta) C + \delta R$
		and $M^{(\delta)} := (I - (I-A)P^{(\delta)})^{-1}$.
		Then,
		\begin{equation*}
			\mathbf{1}^\top  M^{(\delta)} \mathbf{1}\leq \displaystyle \frac{n(1+\varepsilon)^n}{(1-\varepsilon)^{n-1}}
			~~~\text{where}~ \varepsilon=\frac{\delta n}{d}.
		\end{equation*}
	\end{lemma}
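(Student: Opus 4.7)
The plan is to reduce this bound to the already-proved clique bound $\mathbf{1}^\top M^{(0)}\mathbf{1} \leq n$ from Lemma~\ref{lemma:claimC} by interpreting $P^{(\delta)}$ as an $\varepsilon$-perturbation of $P^{(0)} = C$ and invoking Fact~\ref{inverse-bound}. Concretely, I would set $X := I-(I-A)C$ and $\widetilde{X} := I-(I-A)P^{(\delta)}$, so that $X^{-1} = M^{(0)}$ and $\widetilde{X}^{-1} = M^{(\delta)}$. Both have the required structure $I-B$ with $B \geq 0$ and row sums at most $1-\alpha_i \leq 1$; because $\alpha_0 = 1$, row $0$ of each $B$ is identically zero, and combined with the irreducibility of $P^{(\delta)}$ (inherited from $C$, which carries positive weight $1-\delta$), Fact~\ref{fact:assume} guarantees that both inverses exist.

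Next, I would check the two hypotheses of Fact~\ref{inverse-bound} with $\varepsilon = \delta n/d$. The row-sum condition is immediate and in fact holds with equality: since $C$ and $P^{(\delta)}$ are both row-stochastic, $(I-A)C\mathbf{1} = (I-A)\mathbf{1} = (I-A)P^{(\delta)}\mathbf{1}$, so $X\mathbf{1} = \widetilde{X}\mathbf{1}$. For the off-diagonal entrywise condition, the worst case occurs on graph edges: for $\{i,j\} \in E$ with $i\neq 0$, we have $X_{ij} = -(1-\alpha_i)/n$ and $\widetilde{X}_{ij} = -(1-\alpha_i)\bigl[(1-\delta)/n + \delta/d\bigr]$, giving
\[
\frac{|X_{ij} - \widetilde{X}_{ij}|}{|X_{ij}|} \;=\; \delta\cdot\frac{n-d}{d} \;\leq\; \frac{\delta n}{d} \;=\; \varepsilon.
\]
For non-edges (including entries in column $0$, where $R_{ij}=0$) the ratio degenerates to $\delta \leq \varepsilon$, and row $0$ contributes only zero differences. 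The hypothesis $\delta < d/n$ precisely ensures $\varepsilon < 1$, so Fact~\ref{inverse-bound} applies and delivers the entrywise bound $M^{(\delta)} \leq \tfrac{(1+\varepsilon)^n}{(1-\varepsilon)^{n-1}}\, M^{(0)}$.

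Finally, since both matrices are entrywise nonnegative (by Fact~\ref{fact:tech}), sandwiching with $\mathbf{1}^\top(\cdot)\mathbf{1}$ preserves the inequality, and Lemma~\ref{lemma:claimC} gives $\mathbf{1}^\top M^{(0)}\mathbf{1}\leq n$, yielding the desired bound. The only real piece of work is the bookkeeping in the second paragraph: correctly identifying that the graph-edge entries produce the largest relative perturbation and verifying that $\varepsilon = \delta n/d$ dominates this ratio uniformly in $i,j$. Once that is done, the lemma follows by directly chaining Fact~\ref{inverse-bound} with Lemma~\ref{lemma:claimC}.
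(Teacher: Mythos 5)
Your proof is correct and follows essentially the same route as the paper: set up $X^{(0)} = I - (I-A)C$ and $X^{(\delta)} = I - (I-A)P^{(\delta)}$, verify the entrywise and row-sum perturbation hypotheses of Fact~\ref{inverse-bound} with $\varepsilon = \delta n/d$, and chain the resulting bound $M^{(\delta)} \leq \frac{(1+\varepsilon)^n}{(1-\varepsilon)^{n-1}} M^{(0)}$ with $\mathbf{1}^\top M^{(0)}\mathbf{1}\leq n$ from Lemma~\ref{lemma:claimC}. Your case bookkeeping (graph edges give ratio $\delta(n-d)/d$, non-edges give ratio $\delta$, row $0$ is untouched) matches the paper's computation, which bounds all cases at once via $|R_{ij}-\tfrac1n|\leq\tfrac1d$.
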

	\begin{proof}
	For $0 \leq \delta < \frac{d}{n}$,
	$X^{(\delta)} := I - (I-A)P^{(\delta)}$ has the required form
	as stated in Fact~\ref{inverse-bound},
	because $\alpha_0 = 1$ and $P^{(\delta)}$ is irreducible.
	
	We next check that the hypothesis of Fact~\ref{inverse-bound} holds
	for $X^{(0)}$ and $X^{(\delta)}$.
		
		For any distinct $i,j \in V$, since $R_{ij}=0$ or $R_{ij}=\displaystyle \frac1d$, we have
		\begin{align*}
			|X_{i0}^{(0)}-X_{i0}^{(\delta)}|&=\left|-\frac{1-\alpha_i}{n}+\frac{(1-\alpha_i)(1-\delta)}{n}\right| =\frac{\delta(1-\alpha_i)}{n}=\delta |X_{i0}^{(0)}|<\varepsilon|X_{i0}^{(0)}|,\\
			|X_{0j}^{(0)}-X_{0j}^{(\delta)}|&=|0-0| =0 \leq \varepsilon|X_{0j}^{(0)}|,\\
			|X_{ij}^{(0)}-X_{ij}^{(\delta)}|
			&=\left|-\frac{1-\alpha_i}{n}+(1-\alpha_i)\displaystyle \left[\frac{1-\delta}{n}+\delta R_{ij}\right]\right|\\
			&=\left|\delta (1-\alpha_i) \left(R_{ij}-\frac{1}{n}\right)\right|
			\leq \frac{\delta(1-\alpha_i)}{d}=\frac{\delta n}{d}\cdot \frac{1-\alpha_i}{n}=\varepsilon |X_{ij}^{(0)}|.
		\end{align*}
		In addition, as $P^{(\delta)}$ is row-stochastic, we have $P^{(\delta)}\mathbf{1}=\mathbf{1}$ and hence
		\begin{equation*}
			|X^{(0)}\mathbf{1}-X^{(\delta)}\mathbf{1}|
			=|[X^{(0)}-X^{(\delta)}]\mathbf{1}|
			=|\delta (I-A)(P^{(0)}-P^{(\delta)})\mathbf{1}|
			=\mathbf{0}\leq \varepsilon |X^{(0)} \mathbf{1}|.
		\end{equation*}
		
		Therefore, we have $M^{(\delta)}\leq \displaystyle \frac{(1+\varepsilon)^n}{(1-\varepsilon)^{n-1}} M^{(0)}$ 
		by Fact~\ref{inverse-bound}. Then, as $\mathbf{1}^\top  M^{(0)} \mathbf{1}\leq n$ by Lemma \ref{lemma:claimC},
		we have 
		\begin{equation*}
			\mathbf{1}^\top  M^{(\delta)} \mathbf{1}\leq \displaystyle\frac{(1+\varepsilon)^n}{(1-\varepsilon)^{n-1}} \mathbf{1}^\top  M^{(0)} \mathbf{1}\leq \frac{n(1+\varepsilon)^n}{(1-\varepsilon)^{n-1}},
		\end{equation*}
		as desired.
	\end{proof}
	
	With the previous preparation, the next lemma proposes an exact universal perturbation parameter $\delta$ that guarantees the negativity of $y_{ij}$ when $\alpha_j\in [0,1)$, for any distinct $i,j\in V$.
	
	\begin{lemma}
		\label{yij<0}
		Let $\alpha \in [0,1]^N$
		and $M^{(t)}:=[I-(I-A)P^{(t)}]^{-1}$ for $0\leq t<\displaystyle \frac{d}{n}$
		be as defined in Lemma~\ref{sum of entries of M}.
		For any distinct $i,j\in V$, define:
		\begin{equation*}
			y_{ij}^{(t)}:=\mathbf{1}^\top  M^{(t)} e_i \cdot(M_{jj}^{(t)}-1)-\mathbf{1}^\top  M^{(t)} e_j\cdot M_{ji}^{(t)}.
		\end{equation*}
		Let $\displaystyle \delta = \frac{d^3(2d-1)^{3n-3}}{(n+1)^6(2d+1)^{3n}}$. If $\alpha_j\in [0,1)$, then $y_{ij}^{(\delta)} < 0$.
	\end{lemma}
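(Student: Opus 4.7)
The plan is a perturbation argument that reduces the claim to the already-handled unperturbed case $\delta = 0$ of Lemma~\ref{lemma:claimC}. The first step is to apply Lemma~\ref{monotone} with the interaction matrix $P^{(\delta)}$: the function $\alpha_j \mapsto y_{ij}^{(\delta)}(\alpha_j)$ is either constant or strictly monotone on $[0, 1]$, and it satisfies $y_{ij}^{(\delta)}(1) = 0$. If I can show $y_{ij}^{(\delta)}(\alpha_j = 0) < 0$, then the constant case is ruled out (it would force $y_{ij}^{(\delta)} \equiv 0$) and so is the strictly-decreasing case (which would force $y_{ij}^{(\delta)}(0) > y_{ij}^{(\delta)}(1) = 0$), leaving only the strictly-increasing case, which yields $y_{ij}^{(\delta)}(\alpha_j) < 0$ for every $\alpha_j \in [0, 1)$. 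Thus it suffices to fix $\alpha_j = 0$ and establish $y_{ij}^{(\delta)} < 0$.

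With $\alpha_j = 0$ fixed, view $y_{ij}^{(t)}$ as a function of $t \in [0, \delta]$. Since $P^{(t)} = (1-t) C + t R$ and both $C, R$ are row-stochastic, $|\partial P_{kl}^{(t)}/\partial t| = |R_{kl} - C_{kl}| \leq 1$ entrywise. The chain rule combined with Lemma~\ref{sum of dyijdPkl} yields
\[
\left| \frac{d}{dt} y_{ij}^{(t)} \right| \;\leq\; \sum_{k, l \in N} \left| \frac{\partial y_{ij}}{\partial P_{kl}} \right| \;\leq\; 4 \bigl( \mathbf{1}^\top M^{(t)} \mathbf{1} \bigr)^3,
\]
and Lemma~\ref{sum of entries of M} uniformly bounds $\mathbf{1}^\top M^{(t)} \mathbf{1} \leq \frac{n(1+\varepsilon(t))^n}{(1-\varepsilon(t))^{n-1}}$ with $\varepsilon(t) = tn/d$. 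Integrating via the fundamental theorem of calculus,
\[
\bigl| y_{ij}^{(\delta)} - y_{ij}^{(0)} \bigr| \;\leq\; 4\delta \cdot \left( \frac{n (1+\varepsilon(\delta))^n}{(1-\varepsilon(\delta))^{n-1}} \right)^3.
\]
Combining this with Lemma~\ref{lemma:claimC}'s baseline $y_{ij}^{(0)}(\alpha_j = 0) \leq -\tfrac{1}{n+1}$, the proof reduces to verifying that this perturbation is strictly less than $\tfrac{1}{n+1}$ for the prescribed $\delta$.

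With $\delta = \frac{d^3(2d-1)^{3n-3}}{(n+1)^6(2d+1)^{3n}}$, an easy check shows $\delta \leq \frac{1}{2n}$ (using $d^3 \leq (2d+1)^3/4$ and $(2d-1)^{3n-3}/(2d+1)^{3n-3} < 1$), so $\varepsilon(\delta) = \delta n / d \leq \tfrac{1}{2d}$. Substituting $\varepsilon = 1/(2d)$ telescopes the bound to $\frac{n(2d+1)^n}{2d(2d-1)^{n-1}}$, and a direct computation yields
\[
4\delta \cdot \left( \frac{n(2d+1)^n}{2d(2d-1)^{n-1}} \right)^3 \;=\; \frac{n^3}{2(n+1)^6},
\]
which is strictly less than $\tfrac{1}{n+1}$ since $n^3 < 2(n+1)^5$ for every $n \geq 1$. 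Hence $y_{ij}^{(\delta)}(\alpha_j = 0) < -\tfrac{1}{n+1} + \tfrac{1}{n+1} = 0$, and the monotonicity argument from the first paragraph upgrades this to $y_{ij}^{(\delta)} < 0$ on $[0, 1)$.

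The main obstacle is precisely the numerical calibration in the last step: the bound on $|d y_{ij}^{(t)}/dt|$ carries a cubic blow-up through $(\mathbf{1}^\top M^{(t)} \mathbf{1})^3$ and, via Lemma~\ref{sum of entries of M}, an exponential-in-$n$ dependence through $(2d+1)^n/(2d-1)^{n-1}$. The closed-form $\delta$ stated in the lemma is engineered so that these exponential factors cancel cleanly against $\delta$, leaving the manageable polynomial residue $n^3/(2(n+1)^6)$; finding a $\delta$ that both respects $\delta < d/n$ (required for Lemma~\ref{sum of entries of M}) and beats $\tfrac{1}{n+1}$ after the cubic blow-up is the only nontrivial piece of algebra, and this is exactly where the somewhat mysterious exponents $3n$ and $3n-3$ in the definition of $\delta$ come from.
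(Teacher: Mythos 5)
Your argument is correct and follows essentially the same route as the paper: reduce to $\alpha_j = 0$ via Lemma~\ref{monotone}, bound the perturbation $|y_{ij}^{(\delta)} - y_{ij}^{(0)}|$ through Lemmas~\ref{sum of dyijdPkl} and~\ref{sum of entries of M}, and compare with the baseline $y_{ij}^{(0)}(\alpha_j{=}0)\le -\tfrac{1}{n+1}$ from Lemma~\ref{lemma:claimC}. The one small refinement is that you bound the line integral by pairing $\|\nabla_P y_{ij}\|_1$ with $\|\tfrac{d}{dt}P^{(t)}\|_\infty \le 1$ rather than (as the paper does) with $\|\tfrac{d}{dt}P^{(t)}\|_1 \le (n+1)^2$, which saves the $(n+1)^2$ factor and gives the slightly sharper final margin $\tfrac{n^3}{2(n+1)^6}$ in place of $\tfrac{1}{2(n+1)}$; both comfortably beat $\tfrac{1}{n+1}$, so the conclusion is unchanged.
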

	
	\begin{proof}
		
		Recall that $P^{(t)} := (1-t) \cdot C + t R$ as defined in Lemma~\ref{sum of entries of M}.
		We have:
		
		\begin{equation*}
			P_{ij}^{(t)}=\begin{cases}
				t	& \text{if}~ i=j=0\\
				\frac{1-t}{n}	& \text{if}~ i\ne j ~\text{and either}~ i=0 ~\text{or}~ j=0\\
				0	& \text{if}~ i=j\ne 0\\
				\frac{1-t}{n}+\frac{t}{d} ~\text{or}~ \frac{1-t}{n}	& \text{if}~ i\ne j ~\text{and both}~ i,j\ne0
			\end{cases}.
		\end{equation*}
		
		Hence, differentiating with respect to~$t$, we have:
		\begin{equation*}
			\frac{d}{dt} P_{ij}^{(t)}=\begin{cases}
				1	& \text{if}~ i=j=0\\
				-\frac{1}{n}	& \text{if}~ i\ne j ~\text{and either}~ i=0 ~\text{or}~ j=0\\
				0	& \text{if}~ i=j\ne 0\\
				-\frac{1}{n}+\frac{1}{d} ~\text{or}~ -\frac{1}{n}	& \text{if}~ i\ne j ~\text{and both}~ i,j\ne0
			\end{cases}.
		\end{equation*}

		Next, we treat $P$ as an $(n+1)^2$-dimensional vector, we have $\left\Vert \frac{d}{dt} P^{(t)}\right\Vert_1=\sum_{i\in N}\sum_{j \in N} |\frac{d}{dt} P_{ij}^{(t)}|\leq (n+1)^2$. 
		Moreover, we consider $P^{(t)}$ as a function on~$t$, and recalling
		that $M(P) := [I - (I-A)P]^{-1}$, we can treat the following
		as a function on $P$:
		
		$$y_{ij}(P) := \mathbf{1}^\top  M(P) e_i \cdot(M_{jj}(P)-1)-\mathbf{1}^\top  M(P) e_j\cdot M_{ji}(P).$$
		
		Hence, we can also treat $\nabla_P \, y_{ij}(P)$ as an 
		$(n+1)^2$-dimensional vector, and use $\langle \cdot, \cdot \rangle$
		to denote the corresponding inner product operation.
	  We have the following.

		\begin{align*}
			|y_{ij}^{(\delta)}-y_{ij}^{(0)}|
			&= \left|\int_0^\delta \langle \nabla_P \, y_{ij}(P^{(t)}) , \frac{d}{dt} P^{(t)} \rangle ~dt\right| & \text{(Fundamental theorem for line integrals)}\\
			&\leq\int_0^\delta \left|\langle \nabla_P \, y_{ij}(P^{(t)}) , \frac{d}{dt} P^{(t)} \rangle\right| ~dt\\
			&\leq \int_0^\delta \left\Vert \nabla_P \, y_{ij}(P^{(t)}) \right\Vert_2 \cdot \left\Vert P'(t)\right\Vert_2 ~dt & \text{(Cauchy-Schwarz Inequality)}\\
			&\leq \int_0^\delta \left\Vert \nabla_P \, y_{ij}(P^{(t)}) \right\Vert_1\cdot \left\Vert P'(t)\right\Vert_1 ~dt\\
			&\leq \int_0^\delta \left[\sum_{k \in N} \sum_{l \in N} \left|\frac{\partial y_{ij}(P^{(t)})}{\partial P_{kl}}\right|\right]\cdot (n+1)^2 ~dt\\
			&\leq \int_0^\delta 4(\mathbf{1}^\top M^{(t)} \mathbf{1})^3\cdot (n+1)^2 ~dt & \text{(Lemma \ref{sum of dyijdPkl})}\\
			&\leq \int_0^\delta 4\left[\frac{n(1+\varepsilon)^n}{(1-\varepsilon)^{n-1}}\right]^3\cdot (n+1)^2 ~dt & \text{(Lemma \ref{sum of entries of M})}\\
			&\leq \frac{4\delta (n+1)^5 (1+\varepsilon)^{3n}}{(1-\varepsilon)^{3n-3}}
		\end{align*}
		where $\displaystyle \varepsilon=\frac{\delta n}{d}$.
		Since $\displaystyle \delta  = \frac{d^3(2d-1)^{3n-3}}{(n+1)^6(2d+1)^{3n}} <\frac1{2n}$, $\varepsilon=\displaystyle \frac{\delta n}{d}<\frac1{2d}$. Hence, we have
		\begin{equation*}
			|y_{ij}(\delta)-y_{ij}(0)|
			<\frac{4\delta (n+1)^5 (1+\frac1{2d})^{3n}}{(1-\frac1{2d})^{3n-3}}
			=\frac{\delta (n+1)^5 (2d+1)^{3n}}{2d^3(2d-1)^{3n-3}}.
		\end{equation*}
		
		Also, since $\displaystyle \delta = \frac{d^3(2d-1)^{3n-3}}{(n+1)^6(2d+1)^{3n}}$, we have
		\begin{equation*}
			|y_{ij}^{(\delta)}-y_{ij}^{(0)}|
			< \frac{(n+1)^5 (2d+1)^{3n}}{2d^3(2d-1)^{3n-3}}\cdot \frac{d^3(2d-1)^{3n-3}}{(n+1)^6(2d+1)^{3n}}
			=\frac{1}{2(n+1)},
		\end{equation*}
		
		for any $\alpha \in [0,1]^N$ with $\alpha_0=1$ and $\alpha_k\in[0,1]$ for any other $k \in V$. In particular, this is true when $\alpha_j = 0$.
	From Lemma \ref{lemma:claimC}, $y_{ij}^{(0)} \vert_{\alpha_j = 0} \leq \displaystyle -\frac{1}{n+1}$. Hence, $y_{ij}^{(\delta)}\vert_{\alpha_j = 0} < \displaystyle -\frac{1}{2(n+1)}<0$. Then by Lemma \ref{monotone}, $y_{ij}^{(\delta)}<0$ for any $\alpha_j\in [0,1)$.
	\end{proof}

Finally, we are ready to prove the main result of this section.

\begin{proofof}{Lemma~\ref{lemma:structure}}

As mentioned before, the formal goal is to show
statement~(\ref{eq:Hessian}).
Lemma~\ref{dirderi} says that 
it suffices to show that $y_{ij}(\alpha) < 0$ when $\alpha_j < 1$.
Finally,
Lemma~\ref{yij<0} says that 
by choosing $\delta =  \frac{d^3(2d-1)^{3n-3}}{(n+1)^6(2d+1)^{3n}}$
to define $P^{(\delta)} = (1 - \delta) C + \delta R$,
we have $y^{(\delta)}_{ij}(\alpha) < 0$ for $\alpha_j < 1$, as required.
\end{proofof}

\bibliographystyle{alpha}
\bibliography{mybib,opinion,dihyper}

\begin{thebibliography}{AKPT18}

\bibitem[AKPT18]{AbebeKPT18}
Rediet Abebe, Jon~M. Kleinberg, David~C. Parkes, and Charalampos~E.
  Tsourakakis.
\newblock Opinion dynamics with varying susceptibility to persuasion.
\newblock In {\em {KDD}}, pages 1089--1098. {ACM}, 2018.

\bibitem[AXY02]{Alfa}
Attahiru Alfa, Jungong Xue, and Qiang Ye.
\newblock Entrywise perturbation theory for diagonally dominant {M}-matrices
  with applications.
\newblock {\em Numerische Mathematik}, 90:401--414, 2002.

\bibitem[CLS19]{DBLP:conf/www/ChanLS19}
T.{-}H.~Hubert Chan, Zhibin Liang, and Mauro Sozio.
\newblock Revisiting opinion dynamics with varying susceptibility to persuasion
  via non-convex local search.
\newblock In {\em {WWW}}, pages 173--183. {ACM}, 2019.

\bibitem[DeG74]{DeGroot1974}
Morris~H. DeGroot.
\newblock Reaching a consensus.
\newblock {\em Journal of the American Statistical Association},
  69(345):118--121, 1974.

\bibitem[Fei03]{Feige03}
Uriel Feige.
\newblock Vertex cover is hardest to approximate on regular graphs.
\newblock {\em Manuscript}, 2003.

\bibitem[FJ99]{Friedkin1999}
Noah~E. Friedkin and Eugene~C. Johnsen.
\newblock Social influence networks and opinion change.
\newblock {\em Advances in Group Processes}, 16:1--19, 01 1999.

\end{thebibliography}

\end{document}